 \newtheorem{thm}{Theorem}[section]
 \newtheorem{lem}[thm]{Lemma}
 \newtheorem{defn}[thm]{Definition}
\DeclareMathOperator{\sgn}{sgn}
\date{September 25, 2013}
\newcommand{\barK}{\(\overline{k}\)}
\newcommand{\barKa}{\overline{k}}
\newcommand{\bareN}{\overline{n}}
\newcommand{\bareM}{\overline{m}}
\newcommand{\Entry}[2]{#1_{#2}}
\newcommand{\barentry}[2]{\(\overline{#1}_{#2}\)}
\newcommand{\Barentry}[2]{\overline{#1}_{#2}}
\newcommand*{\tempo}{\multicolumn{1}{c|}{0}}
\newcommand*{\tempi}{\multicolumn{1}{c|}{\vdots}}
\newcommand*{\tempb}{\multicolumn{1}{c|}{\overline{p}_{32}}}
\newcommand*{\tempbb}{\multicolumn{1}{c|}{\overline{p}_{\overline{k}2}}}
\newcommand{\matrixS}{\emph{S}}
\newcommand{\matrixR}{\emph{R}}
\newcommand{\matrixE}{\emph{E}}
\newcommand{\matrixP}{\emph{P}}
\newcommand{\matrixH}{\emph{H}}
\newcommand{\matrixSP}{\emph{S'}}
\newcommand{\matrixRP}{\emph{R'}}
\newcommand{\matrixC}{\emph{S}}%for nonambiguity: should be S in proof [C]
\newcommand{\matrixCP}{\emph{S'}}%for nonambiguity: should be S' in proof [C]
\newcommand{\matrixB}{\emph{P}} %for nonambiguity: should be P in proof [B]
\newcommand{\matrixSBPS}{\emph{E}} %for nonambiguity: should be E in proof [S]
\newcommand{\matrixBarC}{\(\overline{S}\)} %for nonambiguity: should be S in proof [C]
\newcommand{\matrixBarB}{\(\overline{P}\)} %for nonambiguity: should be P in proof [B]
\newcommand{\matrixBarCe}{\overline{S}}%for nonambiguity: should be S in proof [C]
\newcommand{\matrixBarBe}{\overline{P}}%for nonambiguity: should be P in proof [B]
\newcommand{\matrixBareS}{\overline{E}}%for nonambiguity: should be E in proof [S]
\newcommand{\smallC}{s}%for nonambiguity: should be S in proof [c]
\newcommand{\BigO}[1]{\ensuremath{\operatorname{O}\bigl(#1\bigr)}}
\newcommand{\thickhline}{%
    \noalign {\ifnum 0=`}\fi \hrule height 1pt
    \futurelet \reserved@a \@xhline
}
\begin{document}
%-------------------------------------------------------------------------
% editorial commands: to be inserted by the editorial office
%
%\firstpage{1}
%\issuenumber{1}
%\Volumeandyear{1 (2004)}
%\Copyrightyear{2004}
%\DOI{003-xxxx-y}
%\Signet
%\commby{inhouse}
%\submitted{March 14, 2003}
%\received{March 16, 2000}
%\revised{June 1, 2000}
%\accepted{July 22, 2000}
%
%---------------------------------------------------------------------------
%Insert here the title, affiliations and abstract:
%

\title{On the Complexity of Reconstructing\\ Chemical Reaction Networks} 
\author{Rolf Fagerberg$^{1}$, Christoph Flamm$^2$,\\ Daniel Merkle$^{1}$, Philipp Peters$^{1}$, and Peter F. Stadler$^{2-7}$\\[0.5cm]
$^1$ Department of Mathematics and Computer Science\\ University of Southern Denmark, Denmark\\[0.2cm]
$^2$ Institute for Theoretical Chemistry, University of Vienna, Austria.\\[0.2cm]
$^3$ Bioinformatics Group, Department of Computer Science, and\\
Interdisciplinary Center for Bioinformatics, University of Leipzig, Germany.\\[0.2cm]
$^4$ Max Planck Institute for Mathematics in the Sciences, Leipzig, Germany.\\[0.2cm]
$^5$ Fraunhofer Institute for Cell Therapy and Immunology, Leipzig, Germany.\\[0.2cm]
$^6$ Center for non-coding RNA in Technology and Health\\University of Copenhagen, Denmark.\\[0.2cm]
$^7$ Santa Fe Institute, 1399 Hyde Park Rd, Santa Fe, NM 87501, USA}

\maketitle
\begin{abstract}
The analysis of the structure of chemical reaction networks is crucial for a better understanding of chemical processes. Such networks are well described as hypergraphs. However, due to the available methods, analyses regarding network properties are typically made on standard graphs derived from the full hypergraph description, e.g.\ on the so-called species and reaction graphs. However, a reconstruction of the underlying hypergraph from these graphs is not necessarily unique. In this paper, we address the problem of reconstructing a hypergraph from its species and reaction graph and show NP-completeness of the problem in its Boolean formulation. Furthermore we study the problem empirically on random and real world instances in order to investigate its computational limits in practice.
\end{abstract}

%##############################################################

\section{Introduction}
\label{secIntro}

The use of graph models of chemical reaction networks has a long history in
physical chemistry as a means of connecting structural properties of
chemical reaction systems with the system's dynamical behavior. The aim of
this line of research to determine constraints on the dynamics in the
absence of detailed quantitative information on the reaction kinetics, as reviewed e.g.\
in \cite{Domijan:08}.

Graph models are derived either directly from a combinatorial view on the
reaction networks, i.e., from the collection of chemical reaction
equations, or from the system of ordinary differential equations that
describe the reaction kinetics. In the S-graph (also called species graph, compound
graph) for instance, we have a (directed) edge from species $v_i$ to species $v_j$ iff a chemical reaction produces output molecule $v_j$ from input molecule $v_i$. In the closely related interaction graph, we
have an edge from species $v_i$ to species $v_j$ iff $\partial [v_i]/\partial[v_j]\ne 0$, where $[v_i]$ denotes the concentration of species $v_i$ as usual in the chemical literature. Here, edges are
in addition endowed with a sign given by the partial derivative. Strong
constraints on the dynamics can be obtained if the interaction graph has no
directed cycle with an odd number of negative signs; for instance
multistability can be ruled out \cite{Soule:03,Thomas:81}.

The bipartite SR-graph (species-reaction graph) \cite{Craciun:06}
treats species and reactions as two types of vertices. In its directed
version, there is a directed edge from a species $v_i$ to a
reaction node $e$ if $v_i$ is a reactant (input molecule) in $e$, and
a directed edge from $e$ to a species $v_j$ if $v_j$ is a product (output
molecule) in $e$. This model is equivalent to the hypergraph
formulation we use in this paper (see Sec.~\ref{secForm}). The closely
related plain SR-graph is undirected and instead uses an elaborate
labeling scheme; it can be viewed as a special type of Petri-net
graphs. SR-graphs have a close relationship to classical deficiency
theory \cite{Feinberg:72,Horn:72}, the historically first approach
towards a qualitative theory of kinetics of reaction networks (see
also \cite{Shinar:13}). It can be seen as an undirected version of the
(directed) incidence graph of the directed hypergraph advocated e.g.\
in \cite{Benkoe:09a,Zeigarnik:00a}. The directed version of the
SR-Graph has been explored e.g.\ in
\cite{Ivanova:79,Mincheva:07,Volpert:87}.  These constructions have
many uses, including providing necessary conditions for Turing
instabilities in reaction-diffusion systems \cite{Mincheva:06} and
providing conditions to rule out multistability or oscillations.  A
direct relation between the directed SR-graphs and interaction graphs
that simplifies many proofs in described is \cite{Kaltenbach:12}.

A different motivation for graph models for chemical reaction systems arose
a decade ago from attempts to identify universality classes of networks
that appear in nature or have been produced by technological or social
processes \cite{Watts:98}. In this context, which focuses on connectivity
properties and path statistics as means of capturing global features,
species (S-)graphs and reaction (R-)graphs were used in particular to compare
metabolic networks \cite{Wagner:01}. Being complementary to the S-graph, the
R-graph has reactions as vertices and has an edge \((a,b)\) iff an output molecule of reaction $a$ is an input molecule of reaction $b$. 

S- and R-graph are also used for the analysis of networks, where the
directed hypergraphs, or equivalently, a directed variant of the bipartite
SR-graphs, are reformulated to S- and R- graphs. This is motivated mostly
by the available statistical toolkit employed for most other networks
models, which works on simple graphs and not hypergraphs, and the desire to
place chemical reaction networks within the scheme of small world and scale
free networks \cite{Wagner:01}. S-graphs furthermore have proved useful as
means of exploring large chemical networks \cite{Klamt:09}. They are of
practical use e.g.\ in approximation algorithms for the minimal seed set
problem \cite{Borenstein:08}, i.e., for finding the smallest set of
substrates that can generate all metabolites.

Methods for a systematic sampling of chemical networks that share
non-trivial, chemically motivated characteristic features would be
very valuable, as this would allow for statistically significant
statements for networks that appear in nature. Having an identical S-
and R-graph is one very natural characteristic feature that can be
used.

Both the S-graph and the R-graph obviously capture only partial
information on the chemical network from which they are derived. It is
less obvious, however, to what extent S-graph and R-graph together
determine the SR-graph and, equivalently, the underlying hypergraph
structure. Conversely, is there, for any given \emph{pair} of S- and
R-graphs, a chemical network from which they derive?

In this paper we address the latter question in detail by phrasing it
as a combinatorial decision problem, called the
Compound-Reaction-Reconstruction (CRR) Problem. We prove that it is
NP-complete and investigate its computational limits, using a
reformulation of the CRR problem as a Boolean satisfiability
problem.

This paper is organized as follows: In Section~\ref{secForm} we introduce
the necessary graph theoretic formalism. The definition of the CRR problem
and the NP-completeness proof are presented in Section \ref{secCRR}. In
Sec.~\ref{secDecForm} and \ref{secEmp}, we reformulate the CRR problem as a
Boolean satisfiability problem and investigate the practical applicability
and performance of different declarative solving methods.  We furthermore
consider in some detail the differences between random instances and
instances derived from known metabolic networks.

%##############################################################

\section{Formalization of Chemical Reaction Networks}
\label{secForm}

\subsection{Chemical Reaction Networks as Directed Hypergraphs} 
\label{subMNDHG}

In this paper, a \emph{chemical reaction network} consists of set $V=\{v_1, v_2, \ldots\}$ of
molecule types (usually termed \emph{compounds} or \emph{species}
\cite{iupac1997}) and a collection of reactions \(A\). Each reaction $a$
transforms a set $a^-\subseteq V$ of reactants (possibly with multiplicities) into a
set $a^+\subseteq V$ of products (possibly with multiplicities). In chemical terms this is written
as \emph{reaction equations} of the form
\begin{equation}
   \sum_{v_i\in a^-} \alpha_{ia} v_i \to \sum_{v_i\in a^+} \beta_{ia} v_i 
\end{equation}
where the \emph{stoichiometric coefficients} $\alpha_{ia}$ and $\beta_{ia}$
are non-negative integers giving the multiplicity with which a compound $v_i \in V$ appears as reactant or
product in the given reaction $a$. We note that in general $a^-\cap a^+$ does not need
to be empty. The molecules in the intersection of products and reactants are
the so-called catalysts of the reaction. Each reaction $(a^-,a^+)$ can be
interpreted as a directed edge in a directed hypergraph.

\begin{figure}[t]
\centering
\begin{tikzpicture}[->,>=stealth',shorten >=1pt,auto,node distance=3cm,
  thick,species node/.style={draw,circle,font=\sffamily\Large\bfseries}]

\tikzstyle{reaction}=[draw,font=\sffamily\Large\bfseries]
\node[species node](E1){$A$} ;
\node[species node](E2) at ($(E1)+(0,-1.7)$){$B$} ; 
\node[species node](P1) at ($(E1)+(5,0)$){$C$} ;   
\node[species node](P2) at ($(P1)+(0,-1.7)$){$D$} ; 
\node[species node](P3) at ($(P1)+(5,0)$){$E$} ;   
\node[species node](P4) at ($(P3)+(0,-1.7)$){$F$} ; 
\draw(E1) [bend right] to node{R1} (P1);
\draw(E2) [bend left] to node{} (P2);
\draw(P1) [bend right] to node{R2} (P3);
\draw($(8.5,-0.7)$) [bend left] to node{} (P4);
\end{tikzpicture}
\caption[Example of a directed hypergraph]{A directed hypergraph, consisting of two reactions \{R1, R2\} and
  six chemical species \{A, B, $\dots$, F\}. Hyperarc R1 has $t(R1) =
  \{A,B\}$ and $h(R1)= \{C,D\}$, hyperarc R2 has $t(R2) =
  \{C\}$ and $h(R2) = \{E,F\}$.}
\label{DHG1}
\end{figure}
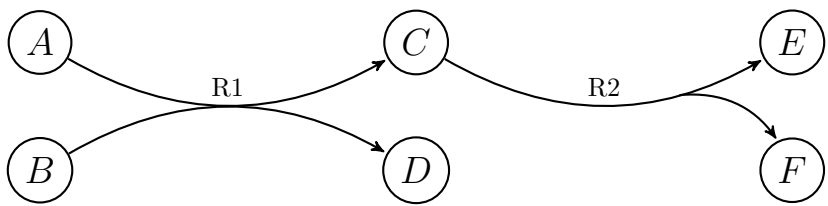

More formally, a \emph{directed hypergraph} \matrixH~is a pair $H = (V,A)$,
with a set of vertices $V$ and a set of hyperarcs $A$, where each hyperarc
$a \in A$ is an ordered pair $(t(a),h(a))$. The tail $t(a)=a^-$ of the
hyperarc in our setting refers to the reactants, while its head $h(a)=a^+$
identifies the products \cite{gallo1993,andersen2011maximizing}.
Fig.~\ref{DHG1} illustrates the hypergraph of a small chemical reaction
network with two reactions.

\subsection{Hypergraphs as Matrices}
\label{subDHGmat}

The stoichiometric matrix $\boldsymbol{\Sigma}$ with entries 
\begin{equation}
  \sigma_{ia} = \beta_{ia} - \alpha_{ia}
\end{equation}
provides a complete description of the mass balance of the each reaction in
the chemical reaction network. Each row of the matrix $\boldsymbol{\Sigma}$ corresponds to a species,
while each column is identified with a reaction. The stoichiometric matrix
is a complete encoding of the chemical reaction network (hence of the directed hypergraph),
%defined in Sec.~\ref{subMNDHG}
provided $t(a)\cap h(a)=\emptyset$ for all
reactions $a\in A$. Note that $\sigma_{ia}<0$ if $v_i$ is consumed by reaction
$a$ while $\sigma_{ia}>0$ if $v_i$ is produced. 

In the following, we focus on the topological structure of the
chemical reaction network and will ignore the multiplicities of
reactants and products. This corresponds to replacing $\sigma_{ia}$ by
its sign in $\boldsymbol{\Sigma}$. Instead of using this reduced
version of $\boldsymbol{\Sigma}$, however, it will be more convenient
to use two \emph{binary} incidence matrices, $\mathbf{E}$ and
$\mathbf{P}$, defined by
\begin{equation}
\begin{split}
  e_{ia} & = \begin{cases} 1 & \text{iff } v_i \in t(a)\\
                            0 & \text{otherwise}  \\
              \end{cases}\\
  p_{ai} & = \begin{cases} 1 & \text{iff } v_i \in h(a)\\
                            0 & \text{otherwise}  \\
              \end{cases}\\ 
\end{split}
\end{equation}
Here, \(e_{ia}\) is \( \sgn \alpha_{ia}\) and \(p_{ai}\) is \( \sgn \beta_{ia}\). The rows of the $n \times m$ matrix \matrixE\ correspond to the reactants of each reaction, while the columns of the $m \times n$ matrix \matrixP\ corresponds to the products.
The matrices  \matrixE\ and \matrixP\ corresponding to the hypergraph of
Fig.~\ref{DHG1} are shown as part of Fig.~\ref{totalG}.

In the S-graph (species graph), two species $v_i, v_j \in V$ are adjacent 
iff there is a reaction that has species $v_i$ as reactant and species
$v_j$ as product. Correspondingly, the R-graph (reaction graph) two
reactions $a$ and $b$ are adjacent iff there is a species $v$ that is 
a product of $a$ and a reactant of $b$. In other words, the adjacency matrices $S= \left( s_{ij} \right)$ and $R= \left( r_{ab} \right)$ are given by:
\begin{equation}
\begin{split}
   s_{ij} &= \begin{cases} 1& \text{iff } 
     \exists a \in A: v_i \in t(a) \wedge v_j \in h(a)\\
                           0& \text{otherwise} 
             \end{cases}\\
   r_{ab} &= \begin{cases} 1& \text{iff } 
     \exists v \in V: v \in h(a) \wedge v \in t(b)\\
                           0& \text{else}
             \end{cases}\\
\end{split}
\end{equation}

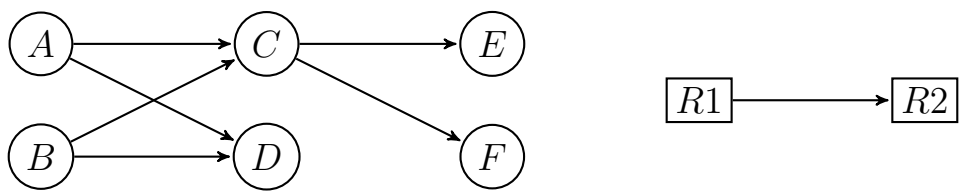
\begin{figure}[h!]
\centering
\begin{tikzpicture}[->,>=stealth',shorten >=1pt,auto,node distance=3cm,
  thick,species node/.style={draw,circle,font=\sffamily\Large\bfseries}]

\tikzstyle{reaction}=[draw,font=\sffamily\Large\bfseries]

\begin{scope}
\node[species node](A){$A$} ;
\node[species node](B) at ($(A)+(0,-1.5)$){$B$} ; 
\node[species node](C) at ($(A)+(3,0)$){$C$} ;   
\node[species node](D) at ($(C)+(0,-1.5)$){$D$} ; 
\node[species node](E) at ($(C)+(3,0)$){$E$} ;   
\node[species node](F) at ($(E)+(0,-1.5)$){$F$} ; 
\draw(A) to node{} (C);
\draw(A) to node{} (D);
\draw(B) to node{} (C);
\draw(B) to node{} (D);
\draw(C) to node{} (E);
\draw(C) to node{} (F);
\end{scope}

\begin{scope}[xshift=6.25cm,yshift=-0.75cm]
\node[reaction](R1) at ($(2.5,0)$){$R1$} ;
\node[ reaction](R2) at ($(R1)+(3,0)$){$R2$} ;   
\draw(R1) to node{} (R2);
\end{scope}
\end{tikzpicture}

\caption[Example of \matrixS- and \matrixR-graph]{The directed hypergraph from Fig.~\ref{DHG1} has the \matrixS-graph shown on the left and the \matrixR-graph, shown on the right.}
\label{SR}
\end{figure}
%###########################

\subsection{Relationships Between \matrixS, \matrixR, \matrixE, and \matrixP}
\label{submatmat}

The two incidence matrices \matrixE\ and \matrixP\ together contain all
necessary information to uniquely define the corresponding hypergraph
\matrixH.  In contrast, the adjacency matrices \matrixS\ and \matrixR\
do not determine \matrixH\ uniquely, in the sense that two different
hypergraphs can lead to the same \matrixS\ and \matrixR, as illustrated
in Fig.~\ref{ambiguousSR}. Also, for a given pair of matrices \matrixSP\
and \matrixRP, there may exist no hypergraph having \matrixSP\ and
\matrixRP\ as its species and reaction graph. We provide an example of
this in the appendix. These observations lead to the question we study
in the remaining sections of this paper.

%###########################
\begin{figure}[h!]
\centering
\begin{tikzpicture}[->,>=stealth',shorten >=1pt,auto,node distance=3cm,
important line/.style={thick,-},
  thick,species node/.style={draw,circle,font=\sffamily\Large\bfseries}]

\tikzstyle{reaction}=[draw,font=\sffamily\Large\bfseries]

\begin{scope}
\node[species node](A){$A$} ;
\node[ reaction](R1)at ($(A)+(2,0)$){$R1$} ;
\node[ reaction](R2) at ($(R1)+(0,-1.5)$){$R2$} ; 
\node[species node](B) at ($(A)+(0,-1.5)$){$B$} ; 
\node[species node](C) at ($(R1)+(2,0)$){$C$} ;   
\node[species node](D) at ($(C)+(0,-1.5)$){$D$} ; 
\draw(A) to node{} (R1);
\draw(A) to node{} (R2);
\draw(B) to node{} (R1);
\draw(B) to node{} (R2);
\draw(R2) to node{} (D);
\draw(R1) to node{} (C);
\end{scope}

\begin{scope}[yshift=-3cm]
\node[species node](A){$A$} ;
\node[ reaction](R1)at ($(A)+(2,0)$){$R1$} ;
\node[ reaction](R2) at ($(R1)+(0,-1.5)$){$R2$} ; 
\node[species node](B) at ($(A)+(0,-1.5)$){$B$} ; 
\node[species node](C) at ($(R1)+(2,0)$){$C$} ;   
\node[species node](D) at ($(C)+(0,-1.5)$){$D$} ; 
\draw(A) to node{} (R1);
\draw(B) to node{} (R2);
\draw(R2) to node{} (D);
\draw(R1) to node{} (C);
\draw(R2) to node{} (C);
\draw(R1) to node{} (D);
\end{scope}
\draw[important line](5,0.3) --  (5,-4.8);
 \draw[important line](-0.3,-2.25) --  (5,-2.25);
\begin{scope}[xshift=6cm]
\node[species node](A){$A$} ;
\node[species node](B) at ($(A)+(0,-1.5)$){$B$} ; 
\node[species node](C) at ($(A)+(3,0)$){$C$} ;   
\node[species node](D) at ($(C)+(0,-1.5)$){$D$} ; 
\draw(A) to node{} (C);
\draw(A) to node{} (D);
\draw(B) to node{} (C);
\draw(B) to node{} (D);
\node[ reaction](R1)at ($(A)+(1.5,-3)$){$R1$} ;
\node[ reaction](R2) at ($(R1)+(0,-1.5)$){$R2$} ; 
\end{scope}

\end{tikzpicture}

\caption[Ambiguous \matrixS- and \matrixR-graphs]{Ambiguous \matrixS- and \matrixR-graphs: On
the left, two different hypergraphs are shown, both giving rise to the
same \matrixS\ and \matrixR\ shown on the right.}
\label{ambiguousSR}
\end{figure}
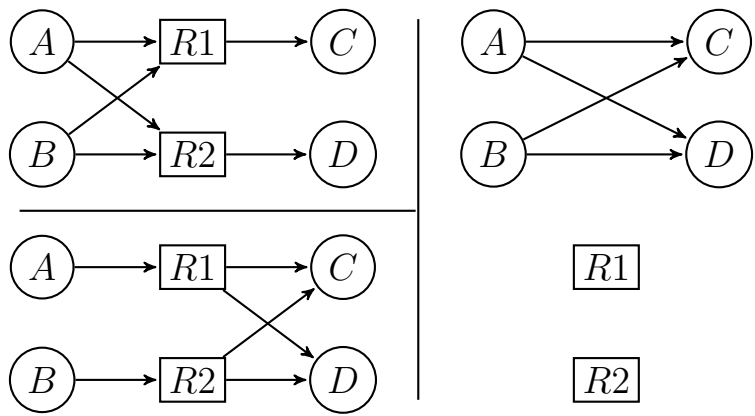
%###########################

The four matrices \matrixS, \matrixR, \matrixE, and \matrixP\ can be 
combined in a natural way to the \emph{total graph} $T$ \cite{capobianco1978,skiena1990}. It corresponds to a 
\((n+m) \times (n+m)\) binary matrix of which \matrixS, \matrixR, \matrixE,
and \matrixP\ form four blocks:
\begin{equation}
    T = \left( \begin{matrix} \matrixS & \matrixE \\
                              \matrixP & \matrixR \\
               \end{matrix} \right)
\end{equation}
Formally, the total graph has vertex set $V\cup A$ and has edges defined as follows: 
$(i,a)$ is an edge in the total graph iff $i\in t(a)$ (i.e., iff $e_{ia}=1$). 
Correspondingly, $(b,j)$ is an edge iff $j\in h(b)$ (i.e., iff $p_{bj}=1$). 
For pairs of vertices of $H$ we have $(i,j)\in E(T)$ 
iff  $i,j$ are adjacent in the species graph, i.e., $s_{ij}=1$. For 
pairs of arcs of $H$, finally, $(a,b)\in E(T)$ iff $a$,$b$ are adjacent in the reaction graph, i.e., $r_{ab}=1$. 
An illustration of the total graph of  the hypergraph of Fig.~\ref{DHG1} is
displayed in Fig.~\ref{totalG}.

\begin{figure}[h!]
  \centering
\caption{The total graph of the hypergraph from Fig.~\ref{DHG1}. All non-zero entries in column $k$ in \matrixE\ are reactants of reaction $k$. All non-zero entries in row $k$ in \matrixP\ are products of reaction $k$. An entry $s_{ij}=1$ (or $r_{ab}=1$, resp.) requires at least one witness, which is R1 for the framed entry in the table, as indicated by the gray marked entries in \matrixE\ and \matrixP.}
\(   T=  \left( \begin{matrix} \matrixS & \matrixE \\
                              \matrixP & \matrixR \\
               \end{matrix} \right) = \)
  \begin{tabular}{c c c c c c c|c c c }
  &&&&&&& R1 &&\\
  &&&&&&&\(\downarrow\)&\\
  &0 &0 &1 &1 &0 &0 &1 &0\\ \cline{4-4}
  &0 &0 &\multicolumn{1}{|c|}{1} &1 &0 &0 &\cellcolor[gray]{.8}1 &0 \\ \cline{4-4}
  &0 &0 &0 &0 &1 &1 &0 &1 \\
  &0 &0 &0 &0 &0 &0 &0 &0 \\
  &0 &0 &0 &0 &0 &0 &0 &0  \\
  &0 &0 &0 &0 &0 &0 &0 &0\\
  \hline
   R1 \(\rightarrow\)&0&0&\cellcolor[gray]{.8}1&1&0&0&0 &1  \\
  &0&0&0&0&1&1&0&0 \\
  \end{tabular}
\label{totalG}
\end{figure}

The definitions of the S- and R-matrices can be recast in terms of
\matrixE~and~\matrixP. For \matrixS\ we have $s_{ij}=1 \iff \exists a:
(e_{ia}=1 \wedge p_{aj}=1)$. The reaction $a$ for which the conjunction
$(e_{ia}\wedge p_{aj})$ is true is called a \emph{witness}
\(w_{ija}^S\) \cite{alon1992}. For every non-zero entry in \matrixS\
there must be at least one witness, while no witness exists for the zero
entries. For illustration, a witness for the framed entry in \matrixS\
in Fig.~\ref{totalG} is reaction R1, and the conjunction-fulfilling
entries of \matrixE\ and \matrixP\ are marked gray. For \matrixR,
respectively, we have $r_{ab}=1 \iff \exists i: (e_{ib}=1 \wedge
p_{ai}=1)$ and the species $v_i$ called a witness \(w_{abi}^R\).

Reinterpreting the matrix elements over the standard Boolean Algebra
$(\{0,1\},\vee,\wedge)$ allows us to rewrite the definition of $s_{ij}$
in the follow form:
\begin{equation}
\label{eqS}
  s_{ij} = \bigvee_{a\in A} (e_{ia}\wedge p_{aj})
\end{equation}
Analogously, we have
\begin{equation}
\label{eqR}
  r_{ab} = \bigvee_{i\in V} (p_{ai}\wedge e_{ib})
\end{equation}
for the entries of matrix \matrixR. Using matrix notation, we can 
write \matrixS\ and \matrixR\ as Boolean matrix products 
\begin{equation}
\label{crrp}
  \matrixS = \matrixE \cdot \matrixP  \qquad 
  \matrixR = \matrixP \cdot \matrixE 
\end{equation}
where the matrix multiplication $\cdot$ uses the Boolean operations 
$\vee$ and $\wedge$ as addition and multiplication operations.

\section{The Compound-Reaction-Reconstruction (CRR(\matrixS,\matrixR)) Problem}
\label{secCRR}

Given $H$, or more precisely, \matrixE\ and \matrixP, we have seen above
that $\matrixS$ and $\matrixR$ are uniquely determined.  Here we ask the
converse question. Given $\matrixS$ and $\matrixR$, is there a hypergraph
$H$ that has $\matrixS$ and $\matrixR$ as its S- and R-matrices,
respectively. We call this problem the
\emph{Compound-Reaction-Reconstruction (CRR(\matrixS,\matrixR)) problem}.
 
\subsection{Problem Definition}
\label{subProb}

\begin{defn}
\textbf{\emph{CRR(\(\boldsymbol{\matrixC,\matrixR}\))} problem}:
Given a Boolean $n\times n$ matrix \matrixC\ and a Boolean $k\times k$ matrix 
\matrixR, is there a Boolean $n\times k$ matrix \matrixSBPS\ and a 
Boolean  $k\times n$ matrix \matrixB\ so that 
$\matrixC = \matrixSBPS \cdot \matrixB$ and  
$\matrixR = \matrixB \cdot \matrixSBPS$ ? 
\label{defCRR}
\end{defn}

The CRR(\matrixS,\matrixR) problem can be seen as a special kind of matrix
decomposition problem. Matrix decomposition methods have been studied
extensively, but mostly on real- and integer-valued matrices, as Singular
value, QR- or LU-factorization. In 2008, Miettinen et al. showed that
simple Boolean matrix decomposition (i.e., the first half of
Eqn.~(\ref{crrp})) is NP-hard \cite{miett2008}.

\subsection{NP-Completeness}
In this section, we study the complexity of the CRR(\matrixC,\matrixR)
problem. A problem is NP-complete if it is in NP and it is NP-hard, see \cite{gareyjohn1979}.
We will prove NP-completeness of the CRR(\matrixC,\matrixR)
problem by a reduction from the NP-complete Set Basis (SB) Problem
\cite[problem SP7]{gareyjohn1979}, defined in Def.~\ref{defSBP} below.

\begin{defn}
  \textbf{\emph{SB(\(\boldsymbol{\mathcal{S},k}\)) Problem}}: Input: A
  collection \(\mathcal{S}\) of subsets of a finite set \(\mathcal{E}\) and
  a positive integer \(k \leq |\mathcal{S} |\). Is there a collection
  \(\mathcal{P}\) of subsets of \(\mathcal{E}\) with \(|\mathcal{P} | = k\)
  such that, for each \(\smallC \in \mathcal{S}\), there is a subcollection
  of \(\mathcal{P}\) whose union is \(\smallC\)?
  \label{defSBP}
\end{defn}

Note that in contrast to the SB Problem, which has to satisfy one equation, the CRR(\matrixC,\matrixR) problem has to fulfill the two equations of (\ref{crrp}), i.e., entries of \matrixE\ and \matrixP\ are dependent on \matrixS\ and \matrixR, making it a twofold Boolean matrix decomposition. 
The SB(\(\mathcal{S},k\)) problem can be formulated as Boolean matrix decomposition as it was noted in \cite{miett2008}, where the subset collection \(\mathcal{S}\) is represented as a matrix \matrixC\ (i.e., each Boolean entry \(\smallC_{ij}=1\) denotes that the $i$-th subset contains the $j$-th element). Hence, \matrixP\ is the basis vector matrix and \matrixE\ is the usage matrix. Thus, the matrix decomposition formulation (SB(\matrixC,$k$) Problem) can be given as follows: Given an \(n \times m\) matrix \matrixC\ and a \(k \in \mathbb{N^+}\), do an \(n \times k\) matrix \matrixSBPS\ and a \(k \times m\) matrix \matrixB\ exist for which the following hold:
\[\boldsymbol{\matrixC = \matrixSBPS \cdot \matrixB}\]
We will now show how to transform any instance of the SB(\matrixC,\(k\)) problem into an instance of the CRR(\matrixC,\matrixR) problem. The transformation will be via an intermediate problem called SBMOD. We first modify matrix \matrixC\ to an \(\bareN \times \bareM\) matrix \matrixBarC, with \(\bareN=n+2\), \(\bareM=m+2\), and \(\barKa = k+2\), as follows:
\begin{equation*}
\matrixBarCe=
\left(\begin{array}{cccc}
1& \cdots&\cdots &1\\
\vdots&0&\cdots &0\\ \cline{3-4}
\vdots &\tempi &\begin{array}{cc}
&\\
&\matrixC  \\
\end{array}\\
1&\tempo& \\
\end{array}\right)
\end{equation*}
With the extended matrix \matrixBarC, we define a modified version of the SB(\matrixC,k) problem:

\begin{defn}
\textbf{\emph{SBMOD(\(\boldsymbol{\matrixBarCe,\barKa}\)) Problem}}: 
Given an \(\bareN \times \bareM\) matrix \matrixBarC\ in the form above and a \(\barKa \in \mathbb{N^+}\),  do an \(\bareN \times \barKa\) matrix \(\matrixBareS\) and a \(\barKa \times \bareM\) matrix \matrixBarB\ exist for which \matrixBarC = \(\matrixBareS\)~$\cdot$ \matrixBarB ?
\end{defn}

\begin{lem}
\label{c1} 
If \(\matrixBareS\)~and \matrixBarB~exist for an instance of SBMOD(\(\matrixBarCe,\barKa\)), then we can assume they have the following structure:

\begin{equation*}
\hspace{-0.7cm}
\matrixBareS =
\left(\begin{array}{cccc}
1& \cdots&\cdots &1\\
1&0&\cdots &0\\ \cline{3-4}
\Barentry{e}{31} &\tempo &\begin{array}{c}  \\\end{array}\\
\vdots&\tempi& \begin{array}{c}~~~\matrixSBPS  \\\end{array}\\
\Barentry{e}{\overline{n}1}&\tempo& \begin{array}{c} \\\end{array}\\
\end{array}\right), 
\matrixBarBe =
\left(\begin{array}{ccccc}
1& 0&\cdots &\cdots&0\\
\vdots&1& \Barentry{p}{23} &\cdots&\Barentry{p}{2\overline{m}}\\ \cline{3-5}
\vdots & \tempb&\begin{array}{c}\\
\end{array}\\
\vdots&\tempi&& \begin{array}{c}
\matrixB\\
\end{array}\\
1&\tempbb& \begin{array}{c}\\
\end{array}&\\
\end{array}\right)
\end{equation*}
where the entries \barentry{e}{ij} and \barentry{p}{kl} are not specified.

\end{lem}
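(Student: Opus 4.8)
The plan is to interpret the Boolean factorization as a covering of the $1$-entries of $\overline{S}$ by combinatorial rectangles, and then to rewrite an arbitrary factorization into the claimed canonical shape by a sequence of product-preserving edits. Concretely, given $\overline{S}=\overline{E}\cdot\overline{P}$, let $c_\ell\subseteq\{1,\dots,\overline{n}\}$ be the set of rows in which column $\ell$ of $\overline{E}$ is $1$, and let $r_\ell\subseteq\{1,\dots,\overline{m}\}$ be the set of columns in which row $\ell$ of $\overline{P}$ is $1$. Then $\overline{S}_{ij}=1$ exactly when $(i,j)$ lies in some rectangle $R_\ell:=c_\ell\times r_\ell$, and every $R_\ell$ must avoid all $0$-entries of $\overline{S}$. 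I will keep the number of rectangles fixed at $\overline{k}$ throughout, editing only their row/column sets, so $\overline{E}$ and $\overline{P}$ retain their dimensions. I will use repeatedly the four structural facts about $\overline{S}$: its first row and first column are all $1$; $\overline{S}_{22}=0$; and row $2$ (resp. column $2$) is $0$ except for its entry in column $1$ (resp. row $1$).

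First I would force the two ``corner'' rectangles. Since $\overline{S}_{21}=1$, some rectangle covers $(2,1)$; as $\overline{S}_{2j}=0$ for all $j\geq 2$, that rectangle cannot contain any column $j\geq2$, so its column set is exactly $\{1\}$. I relabel it $R_1$, so that $r_1=\{1\}$ and $2\in c_1$; this fixes row $1$ of $\overline{P}$ as $(1,0,\dots,0)$. Symmetrically, $\overline{S}_{12}=1$ forces a rectangle whose row set is exactly $\{1\}$ (because $\overline{S}_{i2}=0$ for $i\geq2$); I relabel it $R_2$, giving $c_2=\{1\}$ and $2\in r_2$, which fixes column $2$ of $\overline{E}$ as $(1,0,\dots,0)^{\top}$. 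These two rectangles are distinct, since $2\in c_1$ but $c_2=\{1\}$.

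Next I would pad using the all-$1$ first row and first column. Because every entry of row $1$ of $\overline{S}$ equals $1$, inserting row index $1$ into $c_\ell$ for every $\ell$ (i.e.\ setting the first row of $\overline{E}$ to all $1$'s) only ever covers cells $(1,j)$, all of which are $1$; this edit is therefore product-preserving. Dually, the all-$1$ first column lets me insert column index $1$ into every $r_\ell$, making the first column of $\overline{P}$ all $1$. After these edits $1\in c_\ell$ and $1\in r_\ell$ for every $\ell$; in particular $c_1\supseteq\{1,2\}$ and $r_2\supseteq\{1,2\}$, realizing the prescribed $1$'s at $\overline{e}_{11},\overline{e}_{21},\overline{p}_{21},\overline{p}_{22}$, while $r_1=\{1\}$ and $c_2=\{1\}$ are untouched (column $1$ already lay in $r_1$, and row $1$ already lay in $c_2$).

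Finally I would confine row $2$ to $R_1$. Exactly as in the corner argument, any rectangle with $2\in c_\ell$ must satisfy $r_\ell\subseteq\{1\}$, hence $r_\ell=\{1\}$, so it covers nothing outside column $1$; and $(2,1)$ is already covered by $R_1$. I may therefore delete row $2$ from $c_\ell$ for every $\ell\neq1$ without changing the union, yielding $\overline{e}_{2\ell}=0$ for all $\ell\geq2$, i.e.\ the second row of $\overline{E}$ equals $(1,0,\dots,0)$. Reading off the fixed entries now gives precisely the stated form, with the lower-right $n\times k$ and $k\times m$ blocks $E$ and $P$ free; restricting $\overline{S}=\overline{E}\cdot\overline{P}$ to indices $i,j\geq3$ kills the contributions of $R_1$ and $R_2$ (as $\overline{p}_{1j}=0$ and $\overline{e}_{i2}=0$ there), so $S=E\cdot P$, which is what makes the construction useful. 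The hard part is purely bookkeeping: ensuring that every rewrite is simultaneously product-preserving and structure-respecting—that padding with the first row/column never covers a $0$-cell, that enlarging the $r_\ell$ does not inflate $r_1$ or $c_2$ beyond $\{1\}$, and that stripping row $2$ elsewhere leaves $(2,1)$ covered by $R_1$—all while keeping the rectangle count fixed at $\overline{k}$.
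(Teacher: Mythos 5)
Your proof is correct and takes essentially the same route as the paper's: both arguments extract the forced row $(1,0,\dots,0)$ of $\overline{P}$ from the second row of $\overline{S}$ and the forced column of $\overline{E}$ from the second column of $\overline{S}$, permute them into positions $1$ and $2$, and exploit the all-ones first row and column of $\overline{S}$ to pad the first row of $\overline{E}$ and first column of $\overline{P}$ without changing the Boolean product. Your rectangle-cover phrasing is merely a reformulation of the paper's ``each row of $\overline{S}$ is a Boolean combination of rows of $\overline{P}$'' argument, and is if anything slightly more explicit about why each edit is product-preserving (and about why the two distinguished rectangles are distinct).
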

\begin{proof}

Each row of \matrixBarC~is a binary linear combination of rows in \matrixBarB. Therefore, to obtain the second row of \matrixBarC, there must exist the row \(\left( 1,0,\dots,0\right) \) in \matrixBarB. Note that permuting columns of matrix \(\matrixBareS\) concurrently with rows of \matrixBarB~will not change the result of \(\matrixBareS\)~$\cdot$ \matrixBarB. We can therefore permute the columns of \(\matrixBareS\)~and the rows of \matrixBarB~concurrently, such that \(\left( 1,0,\dots,0\right) \) becomes the first row of  \matrixBarB, without changing \(\matrixBareS\)~$\cdot$ \matrixBarB. The second row of \(\matrixBareS\)~are the coefficients for the linear combination for the second row of \matrixBarC. We can choose it to be  \(\left( 1,0,\dots,0\right) \), again without changing \(\matrixBareS\)~$\cdot$ \matrixBarB.\\
The first row of \(\matrixBareS\)~are the coefficients for the linear combination of rows in \matrixBarB~giving the first row of \matrixBarC~(which are all 1's). Converting any zeros to ones does not change \matrixBarC, therefore we choose the first row of \(\matrixBareS\)~to be  \(\left( 1,\dots,1\right) \).\\
As the first row of \matrixBarC~is 1's only, we can choose the first column of \matrixBarB~to be \(\left( 1,\dots,1\right) \).\\
Due to the first row of \matrixBarC, there must exist a row of \matrixBarB, where \(\Barentry{p}{i2}=1\). It is not in the first row by the choice of the first row of \matrixBarB~just before. We permute rows 2 to \(\barKa\) in \matrixBarB~concurrently with columns 2 to \(\barKa\) in \(\matrixBareS\)~such that entry \(\Barentry{p}{22}=1\).\\
From \(\Barentry{s}{i2} = 0\) and \(\Barentry{p}{22} = 1\) follows, that \(\Barentry{e}{i2}=0\) for \(i\geq 3\).
\end{proof}

\begin{lem}
\label{c2} % something is odd here. changed to c2 
The SBMOD(\(\matrixBarCe,\barKa\)) problem has a solution iff the SB(\(\mathcal{S},k\)) problem has a solution.
\end{lem}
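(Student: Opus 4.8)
The plan is to establish the two implications separately, in each case reducing the statement to the bottom-right $n\times m$ block, where $\overline{S}$ agrees with $S$. Throughout I invoke Lemma~\ref{c1} to assume that a given factorization $\overline{S}=\overline{E}\cdot\overline{P}$ has the displayed canonical form; the crucial point is that this form forces the first two rows and the first two columns of the Boolean product to decouple from the embedded computation of $S=E\cdot P$.

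For ``SB $\Rightarrow$ SBMOD'', suppose $E$ is an $n\times k$ matrix and $P$ a $k\times m$ matrix with $S=E\cdot P$. I would place $E$ and $P$ into the bottom-right blocks of the Lemma~\ref{c1} templates and set the remaining free entries to $\overline{e}_{i1}=1$ for $i\geq 3$, $\overline{p}_{2j}=1$ for $j\geq 3$, and $\overline{p}_{a2}=0$ for $a\geq 3$. I then verify $\overline{E}\cdot\overline{P}=\overline{S}$ entrywise, splitting on the position of each index. Row $1$ of the product is all $1$'s because row $1$ of $\overline{E}$ and column $1$ of $\overline{P}$ are all $1$'s; row $2$ reproduces row $1$ of $\overline{P}$, namely $(1,0,\dots,0)$; column $1$ is all $1$'s since $\overline{e}_{i1}=1$; and for $i\geq 3$ the entry $(\overline{E}\cdot\overline{P})_{i2}$ vanishes because the $a=1$ term dies ($\overline{p}_{12}=0$), the $a=2$ term dies ($\overline{e}_{i2}=0$), and the remaining terms die ($\overline{p}_{a2}=0$). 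For $i,j\geq 3$ only the $a\geq 3$ terms survive, giving $\bigvee_{a\geq 3}\overline{e}_{ia}\wedge\overline{p}_{aj}=(E\cdot P)_{i-2,j-2}=S_{i-2,j-2}=\overline{s}_{ij}$.

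For the converse ``SBMOD $\Rightarrow$ SB'', I start from an arbitrary solution, put it in the form of Lemma~\ref{c1}, and run the same block computation. For $i,j\geq 3$ the contributions $a=1$ and $a=2$ are again annihilated, now purely by $\overline{p}_{1j}=0$ (for $j\geq 2$) and $\overline{e}_{i2}=0$ (for $i\geq 3$), so the bottom-right $n\times m$ block of $\overline{E}\cdot\overline{P}$ equals $E\cdot P$, where $E$ and $P$ denote the corresponding subblocks. Since this block of $\overline{S}$ is exactly $S$, we conclude $E\cdot P=S$; and as $\overline{P}$ has $\overline{k}=k+2$ rows, its subblock $P$ has precisely $k$ rows, so $(E,P)$ solves SB with basis size $k$.

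The work here is essentially bookkeeping, and the one thing to be careful about is the forward construction: the chosen values for the free entries must reproduce all four border strips of $\overline{S}$ exactly while leaving the interior block untouched. The two facts that make this succeed are $\overline{e}_{i2}=0$ for $i\geq 3$, supplied by Lemma~\ref{c1}, and $\overline{p}_{1j}=0$ for $j\geq 2$, immediate from the first row of $\overline{P}$; together they prevent the padding rows and columns from leaking into the embedded $E\cdot P$ product.
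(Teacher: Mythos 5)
Your proof is correct and follows essentially the same route as the paper's: normalize the factorization via Lemma~\ref{c1}, note that $\overline{p}_{1j}=0$ for $j\ge 2$ and $\overline{e}_{i2}=0$ for $i\ge 3$ annihilate the first two terms of the Boolean product on the interior block, and read off $E\cdot P=S$ there. One detail in your favour: in the SB~$\Rightarrow$~SBMOD direction you set $\overline{p}_{2j}=1$ for $j\ge 3$, which agrees with the paper's displayed matrix and is the safe choice, whereas the paper's prose assigns these entries the value $0$ --- an assignment that would fail to reproduce the all-ones first row of $\overline{S}$ whenever $P$ has an all-zero column (which can occur, e.g.\ after the zero-column padding used later to make $S$ square).
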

\begin{proof}
Forward direction: Let \matrixC, \(k\) be given, and assume the SBMOD(\matrixBarC,\barK) problem has a solution \(\matrixBareS\), \matrixBarB. By Lemma~\ref{c1}, we can assume the solution \(\matrixBareS\), \matrixBarB\ to SBMOD(\matrixBarC,\barK) has the following form:

\begin{align*}
\hspace{-0.7cm}
\left(\begin{array}{cccc}
1& \cdots&\cdots &1\\
1&0&\cdots &0\\ \cline{3-4}
\Barentry{e}{31} &\tempo &\begin{array}{c}  \\\end{array}\\
\vdots&\tempi& \begin{array}{c}~~~\matrixSBPS  \\\end{array}\\
\Barentry{e}{\overline{n}1}&\tempo& \begin{array}{c} \\\end{array}\\
\end{array}\right)
&\cdot&
\left(\begin{array}{ccccc}
1& 0&\cdots &\cdots&0\\
\vdots&1& \Barentry{p}{23}&\cdots&\Barentry{p}{2\overline{m}}\\ \cline{3-5}
\vdots & \tempb&\begin{array}{c}\\
\end{array}\\
\vdots&\tempi&& \begin{array}{c}
\matrixB\\
\end{array}\\
1&\tempbb& \begin{array}{c}\\
\end{array}&\\
\end{array}\right)
&=&
\left(\begin{array}{cccc}
1& \cdots&\cdots &1\\
\vdots&0&\cdots &0\\ \cline{3-4}
\vdots &\tempi &\begin{array}{cc}
&\\
&\matrixC  \\
\end{array}\\
1&\tempo& \\
\end{array}\right)\\
\matrixBareS\qquad\qquad\quad &\cdot&\matrixBarBe \qquad\qquad\qquad&=&\matrixBarCe \qquad\qquad\quad\\
\end{align*}
For \barentry{s}{ij} with \(i,j \geq 3\) it holds: 
\[ \Barentry{s}{ij} =  \Barentry{e}{i1} \cdot 0 + 0 \cdot \Barentry{p}{2j} + \sum\limits_{l\geq 3}^{\overline{k}} \Barentry{e}{il} \cdot \Barentry{p}{lj} = \Entry{s}{i-2,j-2} \]
Hence $\matrixE \cdot \matrixP = \matrixS$, implying that the SB(\(\mathcal{S},k\)) problem has a solution $\matrixE, \matrixP$.

Backward direction: Let \matrixC, \(k\) be given, and assume that the SB(\(\mathcal{S},k\)) problem has a solution $\matrixE, \matrixP$. A trivial solution to the SBMOD(\(\matrixBarCe,\barKa\)) problem arises from specifying as follows the entries which were left undetermined in Lemma~\ref{c1}: \(\Barentry{e}{31} = \dots=\Barentry{e}{\overline{n}1}=1\), \(\Barentry{p}{32} = \dots=\Barentry{p}{\overline{k}2}=0\), \(\Barentry{p}{23} = \dots=\Barentry{p}{2\overline{m}}=0\). This results in:

\begin{equation*}
\hspace{-0.7cm}
\left(\begin{array}{cccc}
1& \cdots&\cdots &1\\
1&0&\cdots &0\\ \cline{3-4}
\vdots&\tempi& \begin{array}{c}~~~\matrixSBPS  \\\end{array}\\
1&\tempo& \begin{array}{c} \\\end{array}\\
\end{array}\right)
\cdot
\left(\begin{array}{cccc}
1& 0&\cdots &0\\
\vdots&1& \cdots&1\\ \cline{3-4}
\vdots & \tempo&\begin{array}{cc}
&\\
& \matrixB \\
\end{array}\\
\vdots&\tempi&& \\
1&\tempo&& \\
\end{array}\right)
=
\left(\begin{array}{cccc}
1& \cdots&\cdots &1\\
\vdots&0&\cdots &0\\ \cline{3-4}
\vdots &\tempi &\begin{array}{cc}
&\\
&\matrixC  \\
\end{array}\\
1&\tempo& \\
\end{array}\right)
\end{equation*}
\end{proof}

\begin{lem} The SB(\matrixC,\(k\)) problem is NP-complete even if \matrixC\
  is a square matrix. The same applies to the SBMOD(\matrixC,\(k\))
  problem.
\end{lem}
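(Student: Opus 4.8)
The plan is to treat the two problems in turn, obtaining the square SBMOD result as a corollary of the square SB result. First I would argue membership in NP, which is immediate for both: a pair of Boolean matrices \matrixSBPS\ and \matrixB\ of the prescribed dimensions is a certificate, since the Boolean product \(\matrixSBPS \cdot \matrixB\) can be computed and compared against the target matrix in polynomial time. The work therefore lies in NP-hardness for square inputs.

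For square SB I would give a polynomial-time reduction from the general (rectangular) SB problem, which is NP-complete \cite{gareyjohn1979}. Given an \(n \times m\) instance \matrixC\ together with its parameter $k$, I leave $k$ unchanged and pad \matrixC\ into a square matrix \matrixCP: if \(n=m\) there is nothing to do; if \(n<m\) I append \(m-n\) all-zero rows to get an \(m \times m\) matrix; and if \(n>m\) I append \(n-m\) all-zero columns to get an \(n \times n\) matrix. The reduction is clearly computable in polynomial time, so the entire argument reduces to showing that padding preserves the answer.

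This correctness argument is the step requiring the most care, since the \emph{same} value of $k$ must work in both directions. In set-theoretic terms an all-zero row is the empty set, realizable as the union of the empty subcollection of any basis, and an all-zero column is an element lying in no set and hence constraining no basis, so intuitively neither padding affects the minimum basis size. Concretely: from a decomposition \(\matrixC = \matrixSBPS \cdot \matrixB\) I obtain a decomposition of \matrixCP\ of the same basis size $k$ by padding the left factor with zero rows (resp. the right factor with zero columns) while keeping the other factor fixed; conversely, from a decomposition of \matrixCP\ with basis size $k$ I recover a decomposition of \matrixC\ with the same $k$ by deleting the padded rows of the left factor (resp. the padded columns of the right factor). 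Hence SB(\matrixC, $k$) and SB(\matrixCP, $k$) have the same answer, establishing NP-hardness and therefore NP-completeness of square SB.

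Finally I would derive the SBMOD result from this. The construction of \matrixBarC\ from \matrixC\ appends exactly two rows and two columns, so \(\bareN = n+2\) and \(\bareM = m+2\); in particular \matrixBarC\ is square whenever \matrixC\ is. Starting from a square SB instance---NP-complete by the above---this construction is polynomial-time computable and square, and by Lemma~\ref{c2} the instance SBMOD(\matrixBarC, \barK) is a yes-instance exactly when SB(\matrixC, $k$) is. Combining this with the NP membership noted above yields NP-completeness of square SBMOD. I do not expect any genuine obstacle beyond the bidirectional bookkeeping of the padding argument.
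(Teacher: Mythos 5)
Your proposal is correct and follows essentially the same route as the paper: NP membership by guess-and-check, NP-hardness of square SB by padding a rectangular instance with zero rows (for $n<m$) or zero columns (for $n>m$) and transferring decompositions by padding/truncating the corresponding factor, and square SBMOD via Lemma~\ref{c2}. Your version is marginally more explicit about why the SBMOD instance remains square and why the same $k$ works in both directions, but there is no substantive difference.
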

\begin{proof} 
Showing that the two problems are in NP is done by a simple guess-and-check argument, verifying the solution in polynomial time.
  We now prove the NP-hardness. Assuming \(n < m\), \matrixC\ can easily be extended by \(m-n\) 0-rows to
  gain a square matrix \matrixCP. A solution for SB(\matrixCP,\(k\))
  consists of a \((n+(m-n) )\times k = m \times k\) matrix
  \matrixSBPS\emph{'} and a \(k \times m\) matrix \matrixB\ for which
  \(\matrixSBPS\emph{'} \cdot \matrixB = \matrixCP\). It is easy to see
  that such \matrixSBPS\emph{'}, \matrixB\ exist iff there exist an \(n
  \times k\) matrix \matrixSBPS\ and a \(k \times m\) matrix \matrixB\ such
  that \(\matrixSBPS \cdot \matrixB = \matrixC\) (in one direction, let
  \matrixSBPS\emph{'} be \matrixSBPS\ extended with \(m-n\) 0-rows, in the
  other let \matrixSBPS\ be \matrixSBPS\emph{'} with last \(m-n\) rows
  removed). Following from Lemma~\ref{c2}, the same holds for the
  SBMOD(\matrixC,\(k\)) problem. The same holds for \(n>m\) by extending
  \matrixC\ by \(n-m\) 0-columns and an analogous construction of \matrixB\emph{'}.
\end{proof}

\begin{thm} 
The CRR(\matrixC,\matrixR) problem is in NP-complete.
 \end{thm}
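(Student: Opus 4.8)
The plan is to prove membership in NP directly and then obtain NP-hardness by a polynomial reduction from the square-matrix version of SBMOD, which the preceding lemma shows to be NP-complete. Membership is the easy part: given $S$ and $R$, guess the Boolean matrices $E$ and $P$ and verify the two identities $S = E \cdot P$ and $R = P \cdot E$ by Boolean matrix multiplication, which takes polynomial time.

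For hardness I would take an arbitrary SBMOD instance $(\overline{S},\overline{k})$ with $\overline{S}$ the square $\overline{n}\times\overline{n}$ matrix of the prescribed form, and construct a CRR instance from it. I set the first input to $S := \overline{S}$ and --- the decisive step --- set the second input $R$ to be the $\overline{k}\times\overline{k}$ all-ones matrix. Because the size of $R$ dictates the inner dimension of any CRR witness, this forces $E$ to be $\overline{n}\times\overline{k}$ and $P$ to be $\overline{k}\times\overline{n}$, exactly the shapes required by SBMOD; the construction is plainly polynomial.

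It then suffices to argue that CRR$(\overline{S},R)$ is solvable iff SBMOD$(\overline{S},\overline{k})$ is. The forward direction is immediate: any CRR solution $(E,P)$ already satisfies $\overline{S}=E\cdot P$, so dropping the second equation leaves an SBMOD solution. For the backward direction I would invoke Lemma~\ref{c1} to put an SBMOD solution into its canonical form, and then use the two features of that form that matter here: the first row of $E$ is all ones and the first column of $P$ is all ones. With $P_{a1}=1$ for every $a$ and $E_{1b}=1$ for every $b$, the index-$1$ term $P_{a1}\wedge E_{1b}$ of the Boolean product is $1$ in every position, so $P\cdot E$ collapses to the all-ones matrix regardless of the inner blocks. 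Hence the canonical solution simultaneously satisfies $R=P\cdot E$, and is therefore a solution of CRR.

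The point that needs the most care --- and the reason the intermediate problem SBMOD was introduced --- is the choice of $R$. Since $R$ must coincide with $P\cdot E$ for the very unknown matrices that realise $\overline{S}=E\cdot P$, one cannot hope to encode instance-specific information into $R$. The padding built into $\overline{S}$ (via Lemma~\ref{c1}) resolves this by pinning the first row of $E$ and the first column of $P$ to all ones, which makes $P\cdot E$ identically all-ones and thus independent of the SB solution; the fixed all-ones matrix is consequently the natural and correct target for $R$. Combining the NP membership argument with this reduction, and recalling that square SBMOD is NP-complete, yields that CRR$(S,R)$ is NP-complete.
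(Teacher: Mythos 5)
Your proposal is correct and follows essentially the same route as the paper: reduce from square SBMOD by setting $R$ to the $\overline{k}\times\overline{k}$ all-ones matrix, use Lemma~\ref{c1} to pin the first row of \matrixE\ and the first column of \matrixB\ to ones so that $\matrixB\cdot\matrixSBPS$ is identically all-ones, and observe that conversely any CRR solution already satisfies the single SBMOD equation. The only difference is cosmetic (your ``forward''/``backward'' labels are swapped relative to the paper's).
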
 
 
\begin{proof} 
  To show that the CRR(\matrixC,\matrixR) problem is
  in NP is done by the simple guess-and-check argument (given \matrixE\ and
  \matrixP, it can be checked in polynomial time if they fulfill
  Eq.~\ref{crrp}). For the NP-hardness we give a polynomial time reduction from the SBMOD(\matrixC,\(k\)) problem to the CRR(\matrixC,\matrixR) problem.
Let SBMOD(\matrixC,\(k\)) be an instance with a squared matrix
  \matrixC. Convert it into an instance of the CRR(\matrixC,\matrixR)
  problem where \(\matrixR=(1)^{k \times k}\), i.e., \matrixR\ is the \(k
  \times k\) matrix with all entries equal to 1.

  Forward direction: If SBMOD(\matrixC,\(k\)) has a solution, then 
  CRR(\matrixC,\((1)^{k \times k}\)) has a solution.\\
  To see this, let \matrixSBPS, \matrixB\ be a solution for
  SBMOD(\matrixC,\(k\)), i.e.\ \(\matrixC=\matrixSBPS \cdot \matrixB\). By
  Lemma \ref{c1} we can assume that all entries in the first row of
  \matrixSBPS~are equal to 1. The same applies to the structure of
  \matrixB, where all entries of the first column are equal to 1. Therefore
  \(\matrixB \cdot \matrixSBPS = (1)^{k \times k}\).

  Backward direction: If CRR(\matrixC,\((1)^{k \times k}\)) has a solution, 
  then SBMOD(\matrixC,\(k\)) has a solution.\\
  To see this, let CRR(\matrixC,\((1)^{k \times k}\)) have a solution
  \matrixSBPS, \matrixB\ with \(\matrixC = \matrixSBPS \cdot \matrixB\) and
  \((1)^{k \times k} = \matrixB \cdot \matrixSBPS\).  The second equation
  of CRR(\matrixC,\((1)^{k \times k}\)) (\(\matrixR = \matrixB \cdot
  \matrixSBPS\)) forces \matrixB\ to have k rows and \matrixSBPS~to have k
  columns. By the first equation of CRR(\matrixC,\((1)^{k \times k}\))
  (\(\matrixC = \matrixSBPS \cdot \matrixB\)) it follows that \matrixSBPS,
  \matrixB\ are a solution to SBMOD(\matrixC,\(k\)).
\end{proof}

%##############################################################

\section{Declarative Formulations}
\label{secDecForm}

Despite its NP-completeness, the CRR problem is relevant to chemistry. We therefore want to empirically investigate the CRR problem and examine for how large networks the CRR problem can be solved in a reasonable time. To this end, we formulate it as a Boolean satisfiability problem, allowing us to use established declarative approaches, such as Satisfiability- (SAT-) solvers, Satisfiability-Modulo-Theory- (SMT-) solvers and solvers for Integer Linear Programming (ILP). In this section, we describe these formulations.

\subsection{Satisfiability Modulo Theories}

SMT can be seen as a generalized approach to Boolean satisfiability problems, increasing expressiveness by using e.g.\ first order logic instead of propositional logic. Another approach is to ask for satisfiability with respect to some background theory. This theory can fix interpretations of predicate and function symbols to be e.g.\ integers, reals, arithmetic, quantifiers or arrays \cite{biere2009}. Here, we use the core theory, which is defining the basic Boolean operators, and we have quantifiers available, which allows the following straight forward formulation of the CRR(\matrixS,\matrixR) problem:

\begin{equation}
\begin{aligned}
s_{ij} = 1 &\Leftrightarrow &\exists a: &(e_{ia} \wedge p_{aj}) &= \mbox{TRUE},&&\forall i,j =1,\dots,n\\
s_{ij} = 0 &\Leftrightarrow &\forall a: &(\lnot e_{ia} \vee \lnot p_{aj}) &= \mbox{TRUE},&&\forall i,j =1,\dots,n\\
r_{ab} = 1 &\Leftrightarrow &\exists i: &(e_{ib} \wedge p_{ai}) &= \mbox{TRUE},&&\forall a,b =1,\dots,m\\
r_{ab} = 0 &\Leftrightarrow &\forall i: &(\lnot e_{ib} \vee \lnot p_{ai}) &= \mbox{TRUE},&&\forall a,b =1,\dots,m
\end{aligned}
\label{bools}
\end{equation}

A formulation in propositional logic is also possible, but Eq.~(\ref{bools}) can be expressed directly in the SMT-LIB language \cite{smtLibFormat10}, which is the standard description language for input for SMT-solvers. We formalized \matrixS, \matrixR, \matrixE\ and \matrixP\ as uninterpreted functions, which take as arguments species and reactions, and give as result a Boolean value. 
Thus, in the SMT formulation we have 4 uninterpreted functions (one for each matrix), we declare two data types (species and reactions) and we have \(n+m\) variables of the mentioned data types.

\subsection{Boolean Logic}

SAT-solvers for Boolean satisfiability problems do not allow first order logic, thus Eq.~(\ref{bools}) is translated into propositional order formulae. Furthermore, the standard input format is the DIMACS format, which comprises the Boolean formula in Conjunctive Normal Form (CNF). The Boolean formula derived from Eq.~(\ref{bools}) is not in CNF and must therefore be converted. A plain conversion leads in worst case to a formula with exponentially many clauses, but methods like Tseitin encoding create an equisatisfiable formula in CNF, i.e., new variables are introduced and the new formula is satisfiable iff the original input formula is satisfiable \cite{tseitin1968,tseitin1983}. This new formula grows only linearly in terms of variables and clauses relative to the input formula. For this conversion, we use the tool \emph{limboole} \cite{limboole}. The number of variables for the original SAT formulation is in \(\BigO{nm}\), since only entries of \matrixE\ and \matrixP\ appear in the Boolean formula. By transforming the problem into CNF-form, new variables are introduced. An overview of the numbers of variables in the empirical study is shown in Table~\ref{varclaus} and discussed in Sec.~\ref{subsubempVars}.

\subsection{Integer Linear Programming}

Our third approach to solve the CRR(\matrixS,\matrixR) problem is to use Integer Linear Programming (ILP). Here, the quadratic program of Boolean matrix multiplication has to be linearized over the domain \{0,1\}. The reformulation of the constraints of matrix \matrixS\ is shown in Eqns.~(\ref{ilp0}-\ref{ilp14}), the reformulation for matrix \matrixR\ is carried out analogously. The reformulation follows standard techniques for linearizing quadratic programs to ILP. As described in Sec.~\ref{submatmat}, helper variables \(w_{ija}^S\) were additionally introduced to describe the existence of witness reactions \(a\) for entries \(s_{ij} = 1\) (resp. variables \(w_{abi}^R\) for witness species for entries  \(r_{ab} = 1\)). The objective function of the ILP is not of importance, since we only want to check for satisfiability. 
The number of variables for an ILP formulation is \(\BigO{nm+ n^2m + m^2n}\), where the first term gives the number of entries in \matrixE\ and \matrixP, and the second and third term give the variables \(w_{ija}^S\) and \(w_{abi}^R\). Empirical numbers of variables from the experiments are presented in Table~\ref{varclaus} and discussed in Sec.~\ref{subsubempVars}.

\begin{eqnarray}
    \min        & 0 &\nonumber\\
   \mbox{such that}&&\nonumber\\
    \forall s_{ij}=0: & e_{ia} + p_{aj} \leq 1 &\forall a= 1,\dots,m \label{ilp0}\\
    \forall s_{ij}=1: & \sum \limits_{a=1}^{m} e_{ia} \cdot p_{aj} \geq 1&\label{ilp1}\\
    	&    \Leftrightarrow&\nonumber \\
	 &\sum \limits_{a=1}^{m}  w_{ija}^S \geq 1& \label{ilp11}\\
	 &w_{ija}^S \leq e_{ia} &\label{ilp12}\\
	 &w_{ija}^S \leq p_{aj}&\label{ilp13}\\
	 &w_{ija}^S \geq e_{ia}+p_{aj}-1&  \forall a= 1,\dots,m \label{ilp14}
       \label{ilpform}
\end{eqnarray}

\section{Empirical Section}
\label{secEmp}

In this section, we first apply different solvers on randomly created
instances to get a better understanding of the hardness of the problem. We
then compare instances of \matrixS\ and \matrixR\ from real-world reaction
networks with random instances of comparable size.

\subsection{Computational Limits with Random Test Cases}
\label{subsecRand}

To examine the solvers' computational limits on the CRR(\matrixS, \matrixR)
problem, we apply them on randomly generated instances of \matrixS\ and
\matrixR, which we constructed in the following way.

\subsubsection{Test Data Set}
\label{subsubtestgen}

The test set was created with two pairs of parameters: The pair
\emph{(n,m)} of sizes of \matrixS\ (\(n \times n\)) and \matrixR\ (\(m
\times m\)), which were chosen as \((n,m) \in \{ (10,10), (20,10), (20,20),
(40,20),\) \( (40,40)\}\).

The second pair of parameters is \emph{(p,q)}, which defines the proportion
\(p\) (resp. \(q\)) of zeros out of all entries in \matrixS\
(resp. \matrixR), \(p,q \in [0,1]\). For each pair of matrices \matrixS\
and \matrixR, $p$ and $q$ were chosen uniformly at random in
\([0,1]\). According to the parameters \((p,q)\), the respective number of
zero entries were determined, and their positions in the matrices \matrixS\
and \matrixR\ were chosen uniformly at random.

\subsubsection{Experiments}

The experiments were run on Intel(R) Core(TM) i5 CPU 650 @ 3.20GHz
machines. We used state-of-the-art solvers for all declarative approaches,
namely the SMT-solver \emph{Z3} \cite{de2008z3}, the SAT-solvers
\emph{MiniSAT} \cite{een2004,sorensson2005} and \emph{lingeling}
\cite{lingeling2010}, and the ILP-solver IBM ILOG \emph{CPLEX} Optimization
Studio 12.5.

We ran the different solvers each on 1000 instances of each of the 5
different matrix sizes. In case of a successful reconstruction (i.e., a
pair \matrixE\ and \matrixP\ can be found from the given \matrixS\ and
\matrixR), an instance is called ``satisfiable''. It is called
``unsatisfiable'' if no reconstruction exists. In both cases the instances
are called ``solvable''. If the solving time exceeded 3600 seconds, the
process was aborted for that instance, which is then called
``indetermined''.

\subsubsection{Number of Variables}
\label{subsubempVars}

By converting a Boolean formula into CNF using, e.g., the Tseitin
method \cite{tseitin1968,tseitin1983}, the number of variables in CNF
formulation grows only linearly. The number of variables in the
original formula is \(\BigO{nm}\), where we observed approximately
\((n+m)\cdot2nm\) variables in the CNF formula of the test cases
(Table~\ref{varclaus}). Striking is the low variance in the number of
variables, which shows that the number of zeros and ones have only a
small impact on the size of the CNF.

As mentioned earlier, the number of variables for an ILP formulation is
\(\BigO{nm+ n^2m + m^2n}\). Empirical numbers of variables from the
experiments reflect the influence of the number of ones in the matrices
\matrixS\ and \matrixR\ on the number of used variables. The minimal value
in number of variables is seen for a low number of ones in the matrices,
since a low number of witness variables \(w_{ijk}^S\) and \(w_{jik}^R\) are
needed to be added. Vice versa, the maximal value is seen for a high number
of ones in the matrices.

\begin{table}[h!]
\centering
\caption{Numbers of variables 
  from the empirical studies, using test data as described in \ref{subsubempVars}. As expected, the number of variables in CNF formulation grows only linearly in terms of $n$ and $m$. The number of variables in the ILP formulation highly depends on the number of zeros in \matrixS\ and \matrixR, which determines the number of introduced witness variables.}
\begin{tabular}{c|ccc |ccc}
	Size&SAT&in&CNF&&ILP&\\
	&&\(\sharp\)variables&
	&&\(\sharp\)variables&\\ 
	&min&max&median&min&max&median\\ \hline
	(10,10)&4159&4399&4349&
	310&2170&1205\\
	(20,10)&12303&12799&12699&
	560&6340&3435\\
	(20,20)&32885&33599&33399&
	1180&16440&8920\\
	(40,20)&97424&99199&98799&
	2600&49240&25010\\
	(40,40)&260033&262399&261599&
	3600&128840&67840\\
\end{tabular}
\label{varclaus}
\end{table}

\subsubsection{Computational Limits on Random Instances}
\label{subsubperf}

Fig.~\ref{bars} shows the proportion of different outcomes among the 1000 given instances for each solver and each instance size: The number of satisfiable instances are colored green, and the unsatisfiable are colored gray. The number of indetermined instances, which could not be solved in the given time border of 3600 seconds, are marked in red. The SMT- and SAT-solvers behave similar w.r.t. to the number of solvable instances, whereas the ILP-solver CPLEX had more time-outs than the other two and hence appear to be the slowest of the methods. For the largest instance size \((n,m) = (40,40)\), the ILP-solver timed out in around 60~\% of all cases, instead of around 40~\% time-outs for the SMT- and SAT-solvers. 

To compare the solving methods in more detail and additionally to
investigate on which type of the instances the solving methods succeed
or fail, the solving time of the instances is investigated. For each
solver and instance size, Fig.~\ref{curveslog} shows for time~$t$ how
many instances have a solving time below~$t$. The ILP-solver has
markedly fewer solved instances for each~$t$ than the SMT- and
SAT-solvers. Additionally, the ILP-solver as well as the SAT-solver
\emph{lingeling} show a different solving behavior than the other two
solvers, in terms of a delay in providing solutions. This different
behavior occurs especially on the larger instances. That the SMT- and
SAT-solver MiniSAT behave similarly could indicate that Z3 is using a
propositional SAT-solver, as the input contains only Boolean variables.

The increase of indetermined instances correlated to the instance size (Fig.~\ref{bars}) and the increasing number of solvable instances over time (Fig.~\ref{curveslog}) brings up the question, which types of instances can be solved and which can not. In Fig.~ \ref{dots} we show whether an instance of a certain size with certain \((p,q)\)-values was solvable (green for satisfiable, red for unsatisfiable) or was indetermined (blue). Especially from the case of \((n,m) = (10,10)\) it becomes apparent, that a phase transition from satisfiable to unsatisfiable instances takes place, when the value of $p$ and $q$ crosses a certain level. Additionally, another observation is striking: If the value of \((p,q)\)-values of the matrices is in a certain range, the instances are hard to solve. The solvers timed out on instances with an amount of zeros in the vicinity of the phase transition, but did solve very dense and sparse instances. This observation shows that the hardest instances to solve appear close to the phase transition line from satisfiable to unsatisfiable instances. This intuitively makes good sense as it is harder for the solvers to find evidence of the outcome. Note, that in the top right corner there exist satisfiable instances among the unsatisfiable as well (cmp. Fig.~\ref{upperCorner}A). Especially, if an instance of \matrixS\ and \matrixR\ contains only zeros, then it is apparent from Eq.~\ref{bools}, that such an instance is an 2-SAT formula and additionally the CRR(\matrixS, \matrixR) problem is trivially satisfiable.

\subsection{Real World Instances}
\label{subSparse}

In this section, we compare random instances with real world networks. Reaction networks are usually large and sparse, thus they would belong in the upper right corner of a figure like Fig.~\ref{dots}. To compare real world networks with random instances and to compare the solvers' behavior on them, we selected test data in the following way.

\subsubsection{Test Data Sets}

We chose instances of reaction networks to be pathways of the yeast Saccharomyces cerevisiae (strain S288C), which are derived from the database \emph{MetExplore} \cite{metexplore2010}.
Table~\ref{varReal} shows this selection of small real world networks, where networks of size larger than \((n,m)=(130,130)\) had to be discarded, because the translating tool to CNF or the solvers themselves ran out of memory. 

Note, that in case of the real networks, the hypergraph (defined by \matrixE\ and \matrixP) is given and \matrixS\ and \matrixR\ are derived from that. Thus, the CRR(\matrixS, \matrixR) problem is satisfiable for all of these instances.

Table~\ref{varReal} also provides the \((p,q)\)-values of the real networks, where \(p\) and \(q\) were always greater than \(0.96\). Additionally, we created three sets of each 10000 random instances of comparable size, which we chose to be \((n,m) \in \{(40,40), (100,100), (120,120)\}\). To achieve as well a comparable sparsity to the real chemical networks, the values of $p$ and $q$ are uniformly at random chosen in \([0.9, 1]\). According to the p- and q-value, entries in the matrices \matrixS\ and \matrixR\ were chosen to be zero.

\subsubsection{Comparison of Random Instances with Real-world Networks}

As one can see in Table~\ref{varReal}, all sparse instances tested, real instances as well as random instances, are solvable with the SAT-solvers within seconds, in contrast to hard instances of smaller sizes. The fact that no solver timed out on any instance indicates a high impact of the sparsity of the matrices \matrixS\ and \matrixR\ has on the solvers' performances.

The solvabilty of these instances is depicted in Fig.~\ref{upperCorner}, where only the the upper right corner (i.e., instances with \((p,q)\)-values of (\(\geq0.9,\geq0.9\))) is shown. For comparison, the real networks are included in blue in Fig.~\ref{upperCorner}C. The fact that just a small fraction of randomly created instances of same sparseness as our real world instances is satisfiably solvable, whereas all \matrixS\ and \matrixR\ instances derived from real networks of course are, indicates that another structural property than sparsity alone characterizes real networks, e.g.\ clustering or connectedness.

\begin{table}
\caption{Some characteristic parameters of four metabolic pathways of Saccharomyces cerivisiae, and additionally its full reaction network are shown. A dash indicates an aborted run, due to memory issues. The networks are obtained from the MetExplore database \cite{metexplore2010}. Additionally, the parameters of random networks of different sizes are depicted. The numbers on variables and time in the random networks are the median over 10000 instances. The solver used is lingeling.}
\begin{center}
\begin{tabular}{ c|c|c|c|c}
Pathway&Size&CNF var&
time&(p,q)\\ \hline
4-Hydroxybenzoate&(110, 102)&4790939&
29.69&(0.979, 0.964) \\ \hline
TCA&(117, 111)&5961032&
33.77&(0.980, 0.961)\\ \hline
Sphingo Lipids&(125, 116)&7032499 & 
43.35&(0.982, 0.967)\\ \hline
Chorismate&(148, 143)&-&-&(0.984, 0.972)\\ \hline
Saccharomyces cerevisiae&(441, 504)&-&-&(0.993, 0.989) \\  \noalign{\hrule height 1.5pt}
Random Networks & (40,40) & 260951 & 
0.9 & (\(\geq0.9,\geq0.9\)) \\
Random Networks & (100,100) & 4030951 & 
16.44  & (\(\geq0.9,\geq0.9\)) \\
Random Networks & (120,120) & 6956569 & 
28.52 & (\(\geq0.9,\geq0.9\)) \\
\end{tabular}
\label{varReal}
\end{center}
\end{table}

\begin{figure}[t]
  \centering
  \subfloat[]{\label{fig:b1010b}\includegraphics[width=0.28\textwidth]{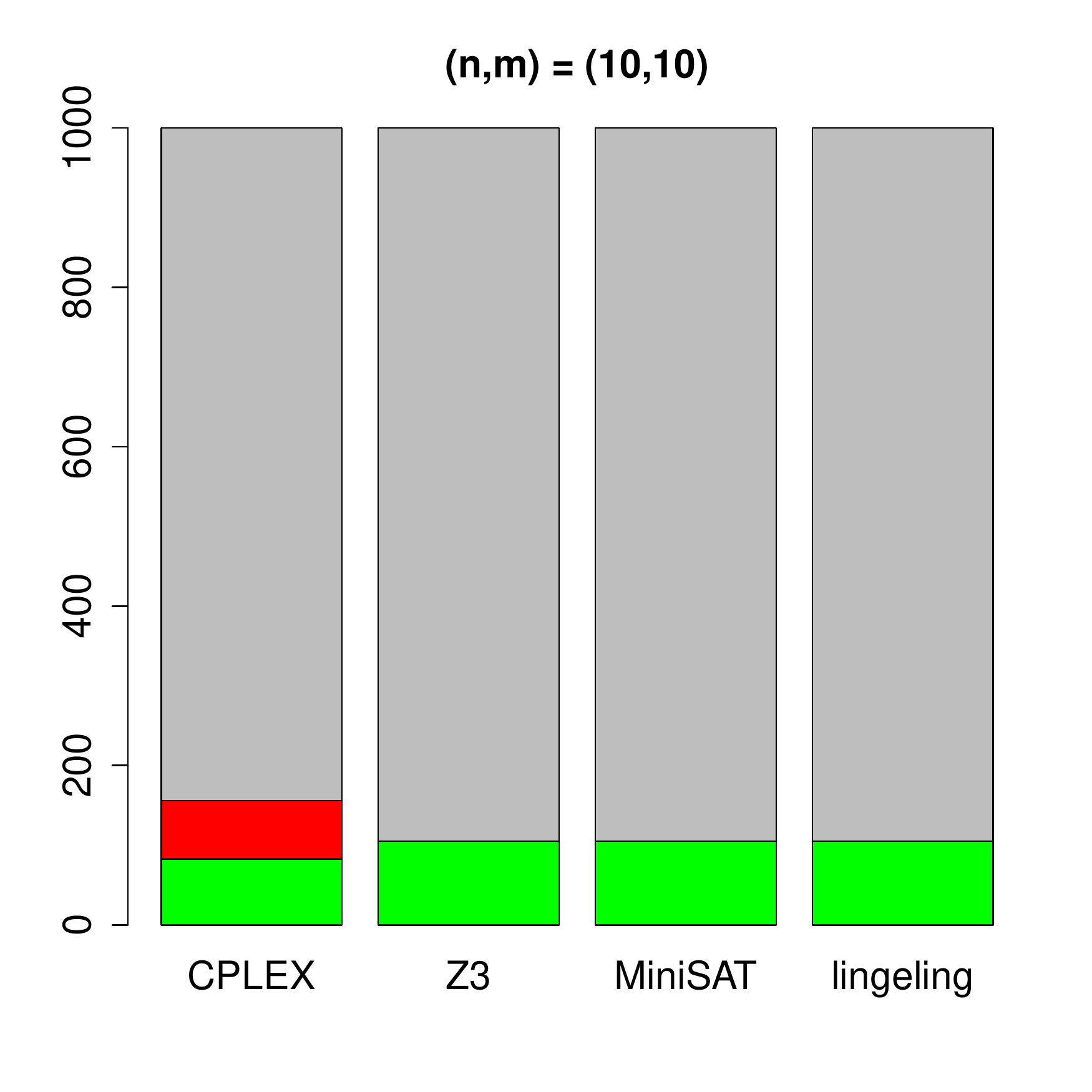}}
  \subfloat[]{\label{fig:b2010b}\includegraphics[width=0.28\textwidth]{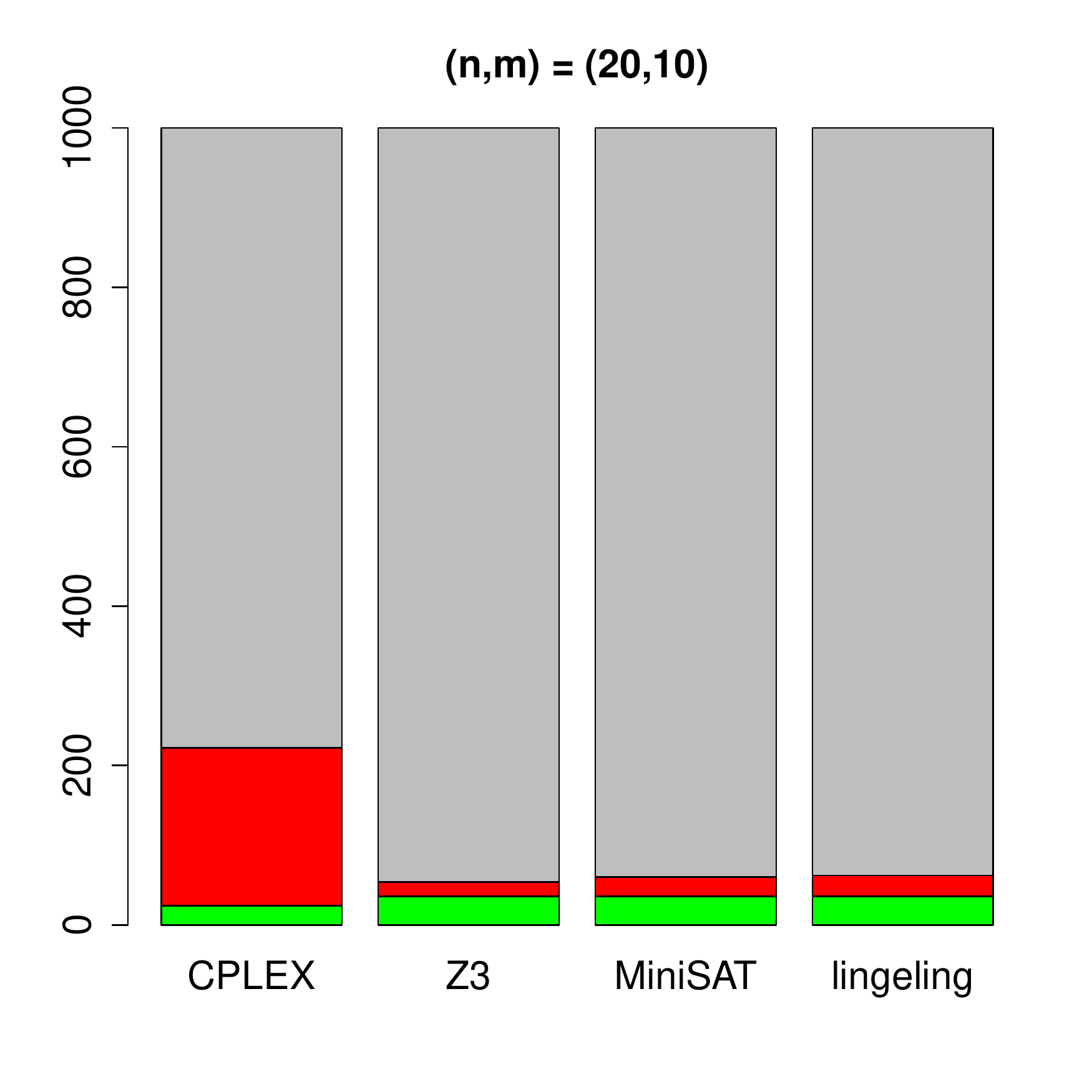}}
  \subfloat[]{\label{fig:b2020b}\includegraphics[width=0.28\textwidth]{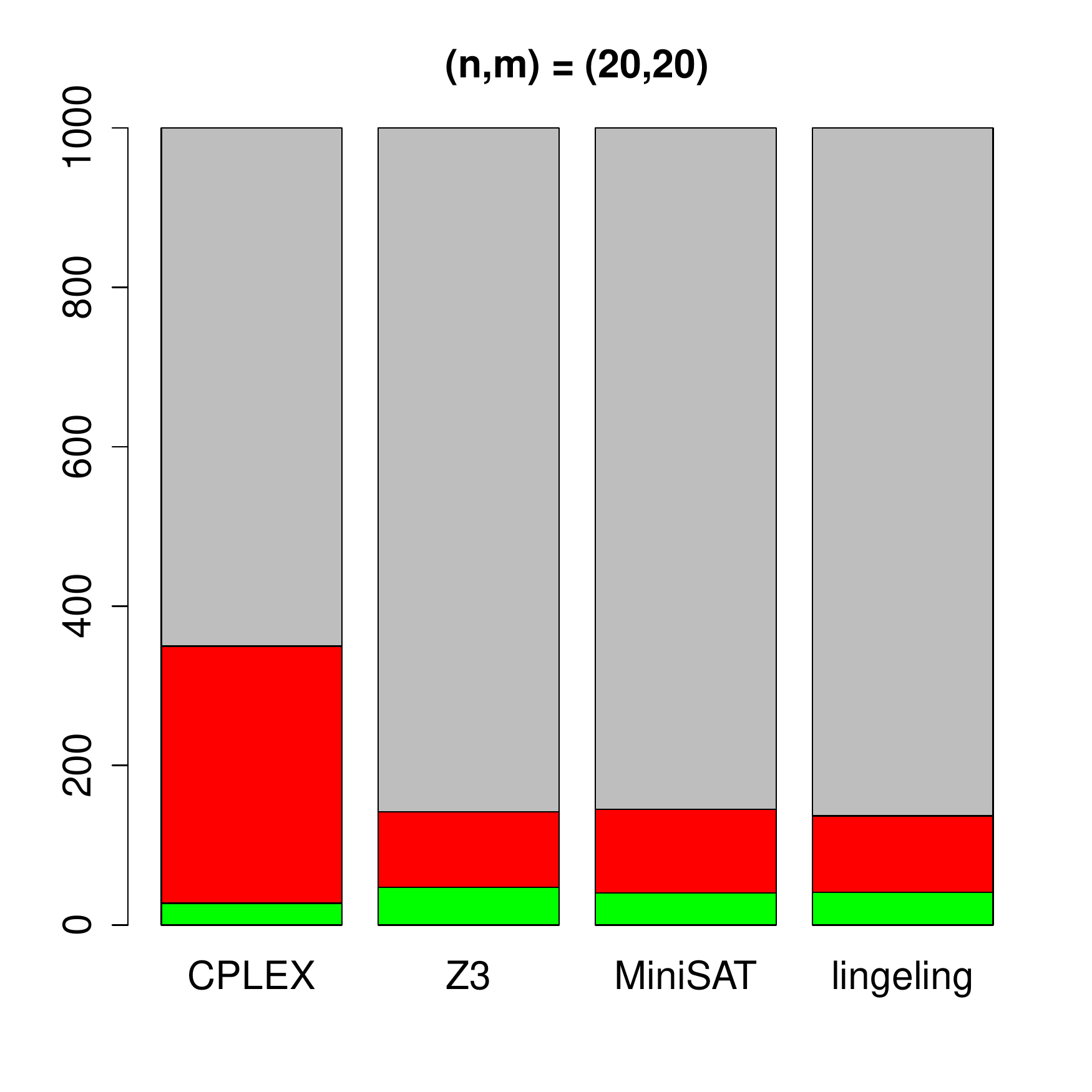}}\\
  \subfloat[]{\label{fig:b4020b}\includegraphics[width=0.28\textwidth]{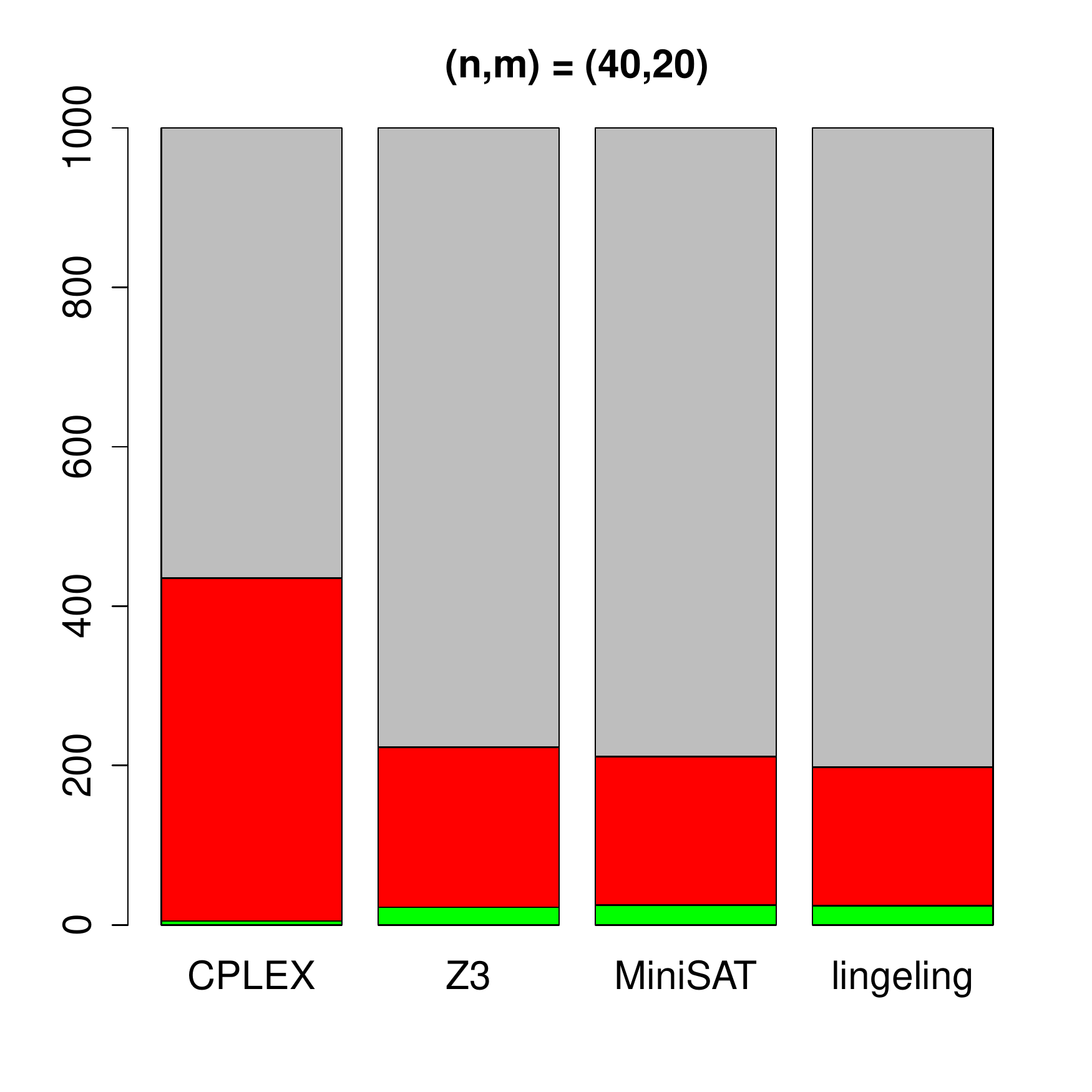}}
  \subfloat[]{\label{fig:b4040b}\includegraphics[width=0.28\textwidth]{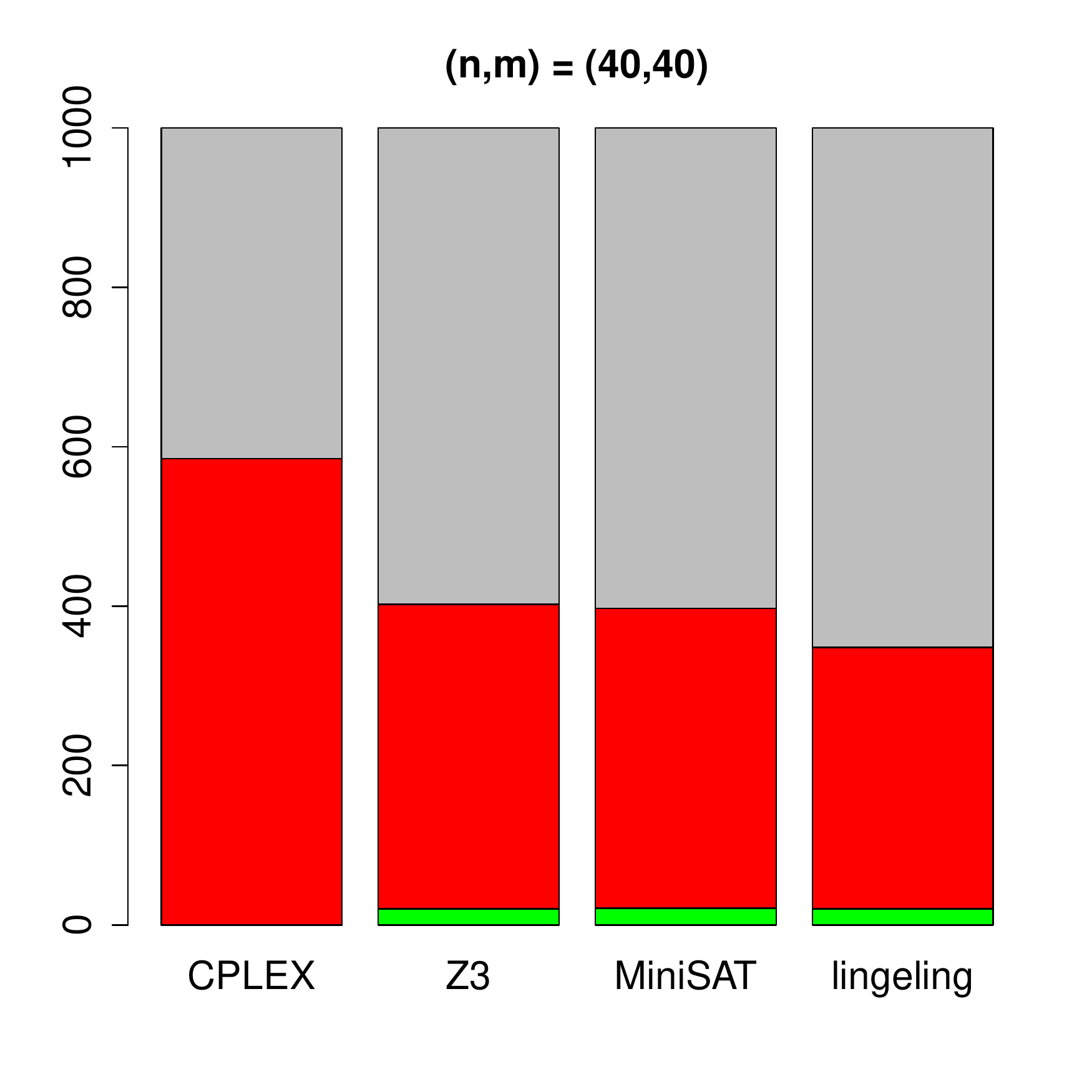}}\\
  \caption[Portions of sat/unsat/indet instances (of testruns)]{The bars show the portion of satisfiable (green), indetermined (red), and unsatisfiable (gray) instances among the 1000 given instances. They are arranged by the solving method.}
  \label{bars}
\end{figure}

\begin{figure}[t]
  \centering
  \subfloat[]{\label{fig:1010l}\includegraphics[width=0.28\textwidth]{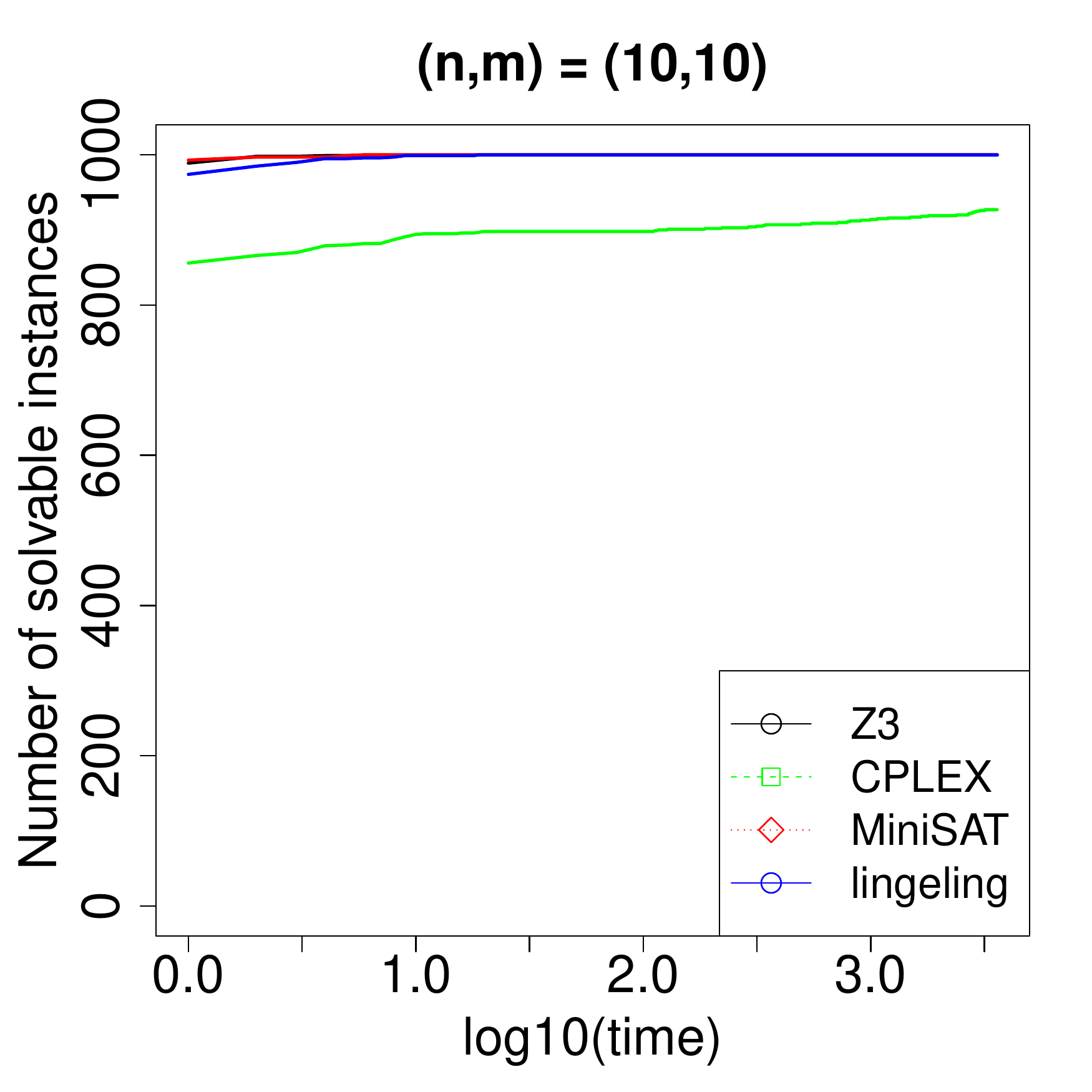}}
  \subfloat[]{\label{fig:2010l}\includegraphics[width=0.28\textwidth]{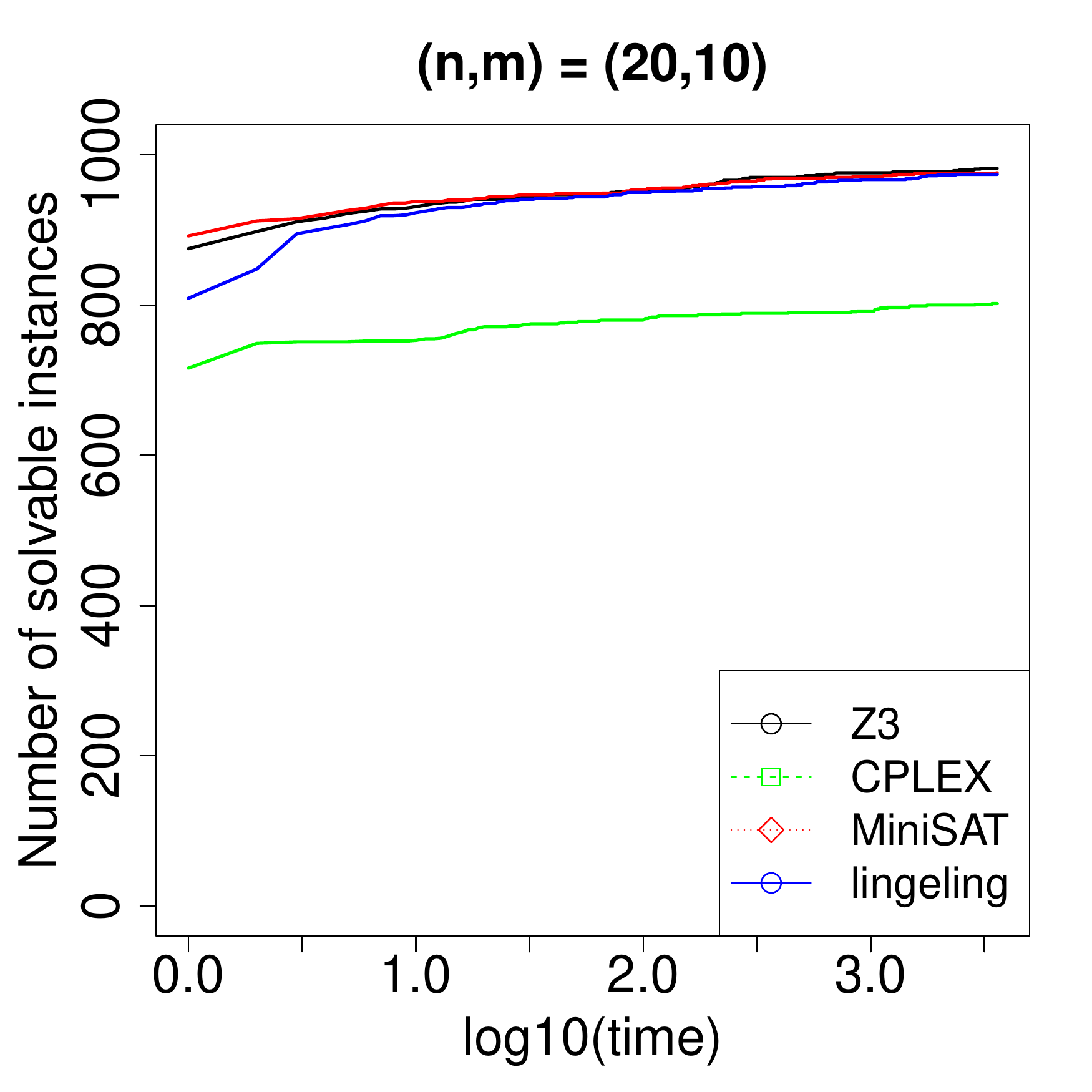}}
  \subfloat[]{\label{fig:2020l}\includegraphics[width=0.28\textwidth]{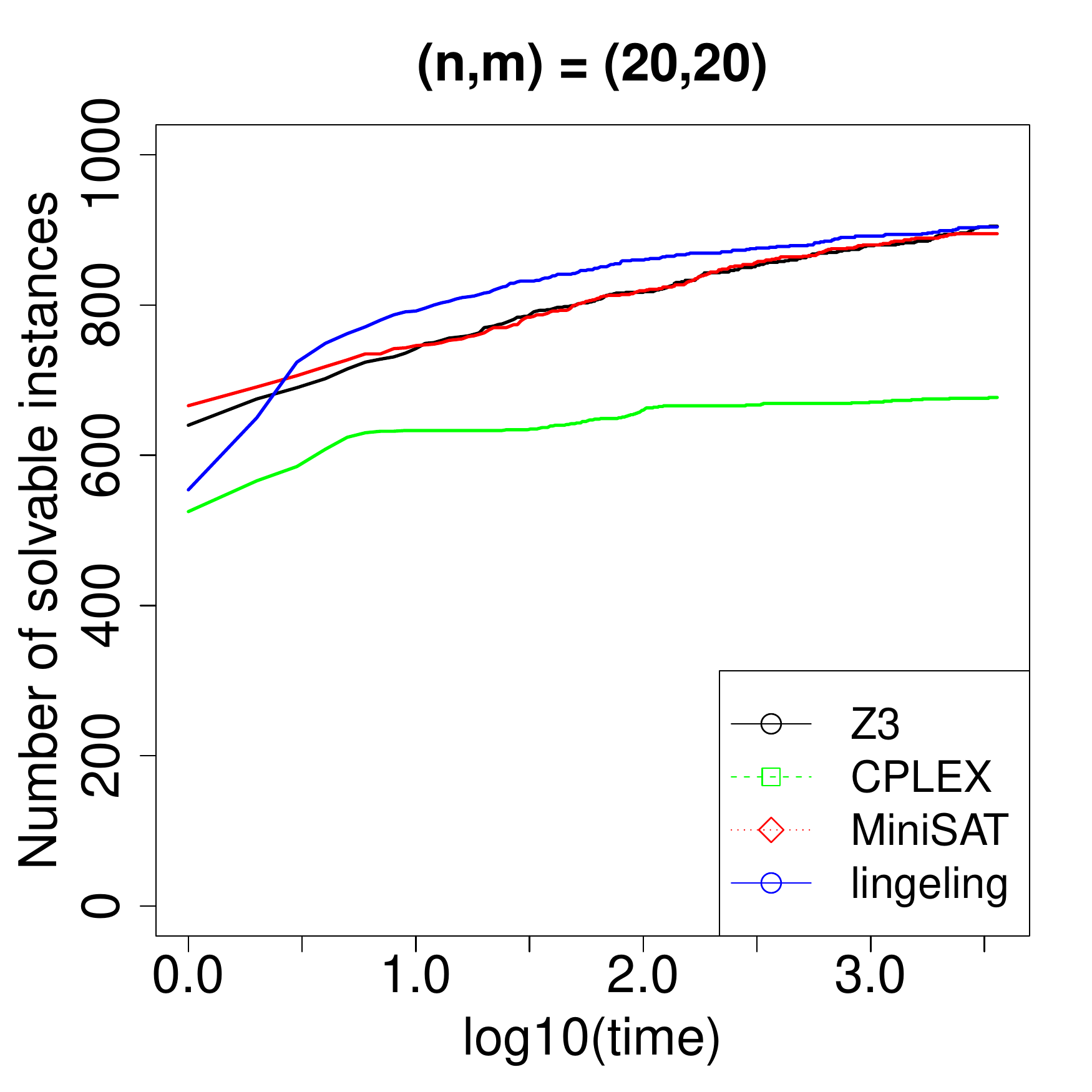}}\\
  \subfloat[]{\label{fig:4020l}\includegraphics[width=0.28\textwidth]{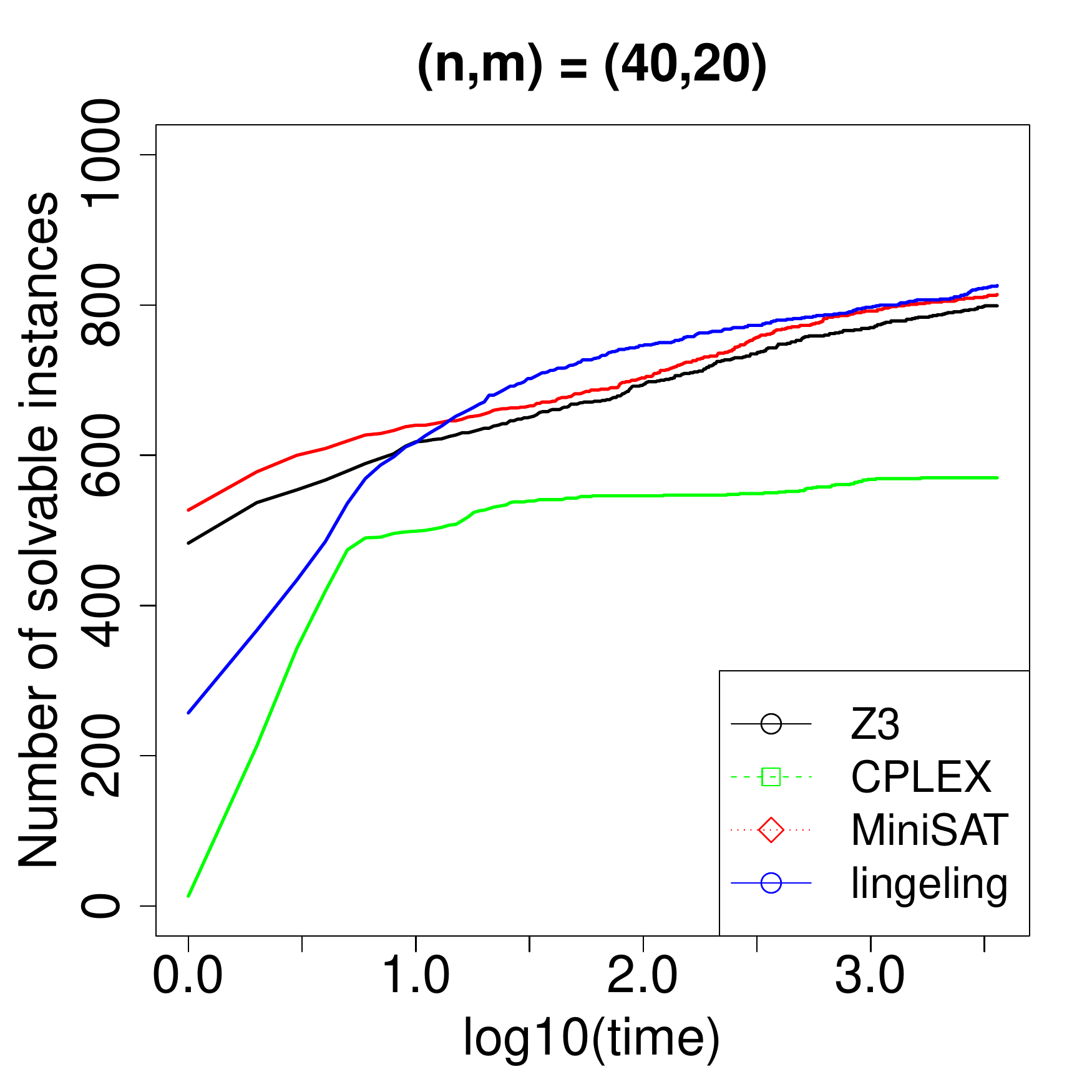}}
  \subfloat[]{\label{fig:4040l}\includegraphics[width=0.28\textwidth]{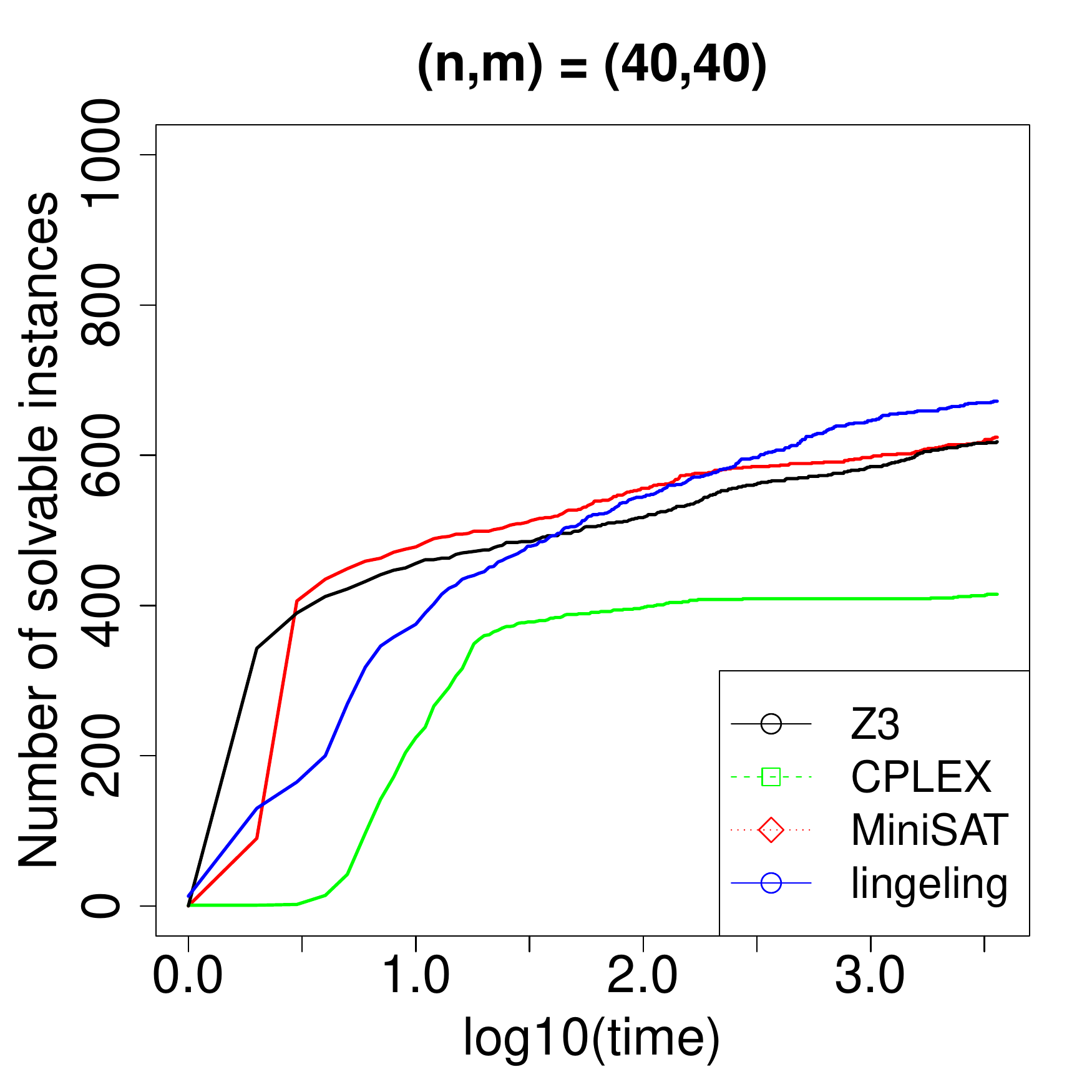}}
  \caption[Number of solvable instances over time]{Plots for the various instance sizes of how many of the 1000
  random instances each could be solved (as satisfiable or
  unsatisfiable) within in a certain amount of time. Depicted for each
  solver and over logarithmically scaled time axis. Time is measured in
  seconds.}
  \label{curveslog}
\end{figure}

\begin{figure}[t]
  \centering
\vspace{-8mm}
  \subfloat[]{\label{fig:d1010d}\includegraphics[width=0.45\textwidth]{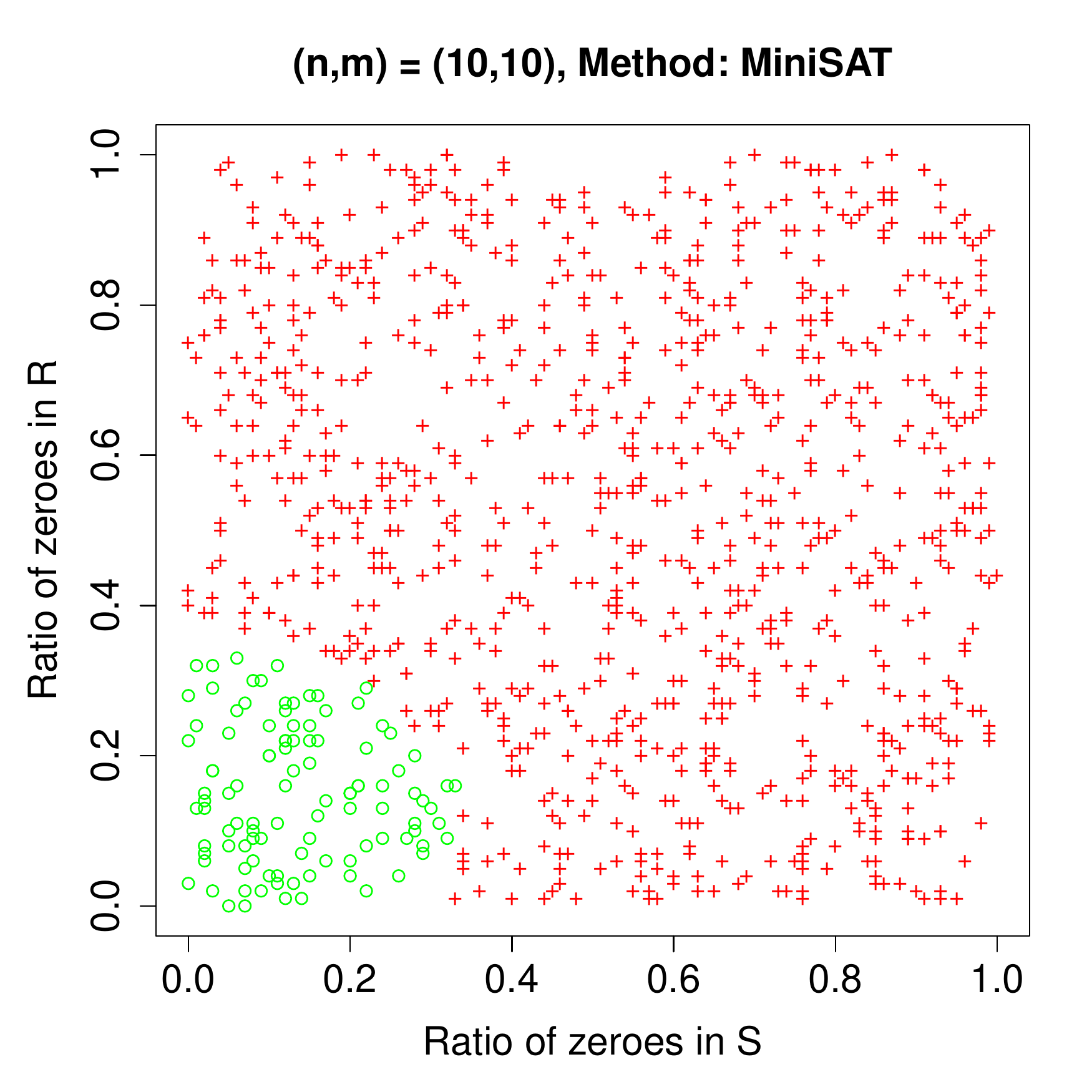}}
  \subfloat[]{\label{fig:d2010d}\includegraphics[width=0.45\textwidth]{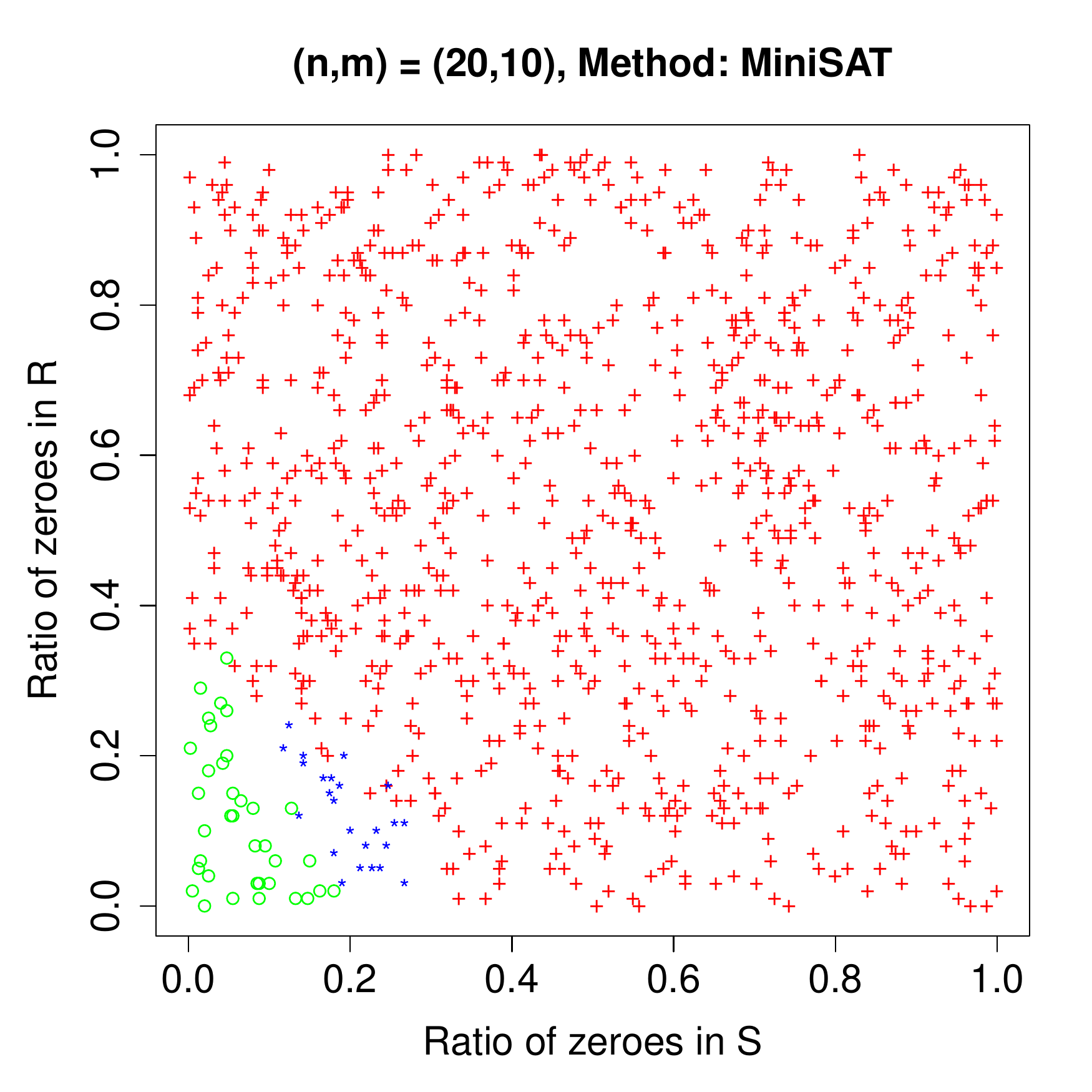}}\\
\vspace{-5mm}
  \subfloat[]{\label{fig:d2020d}\includegraphics[width=0.45\textwidth]{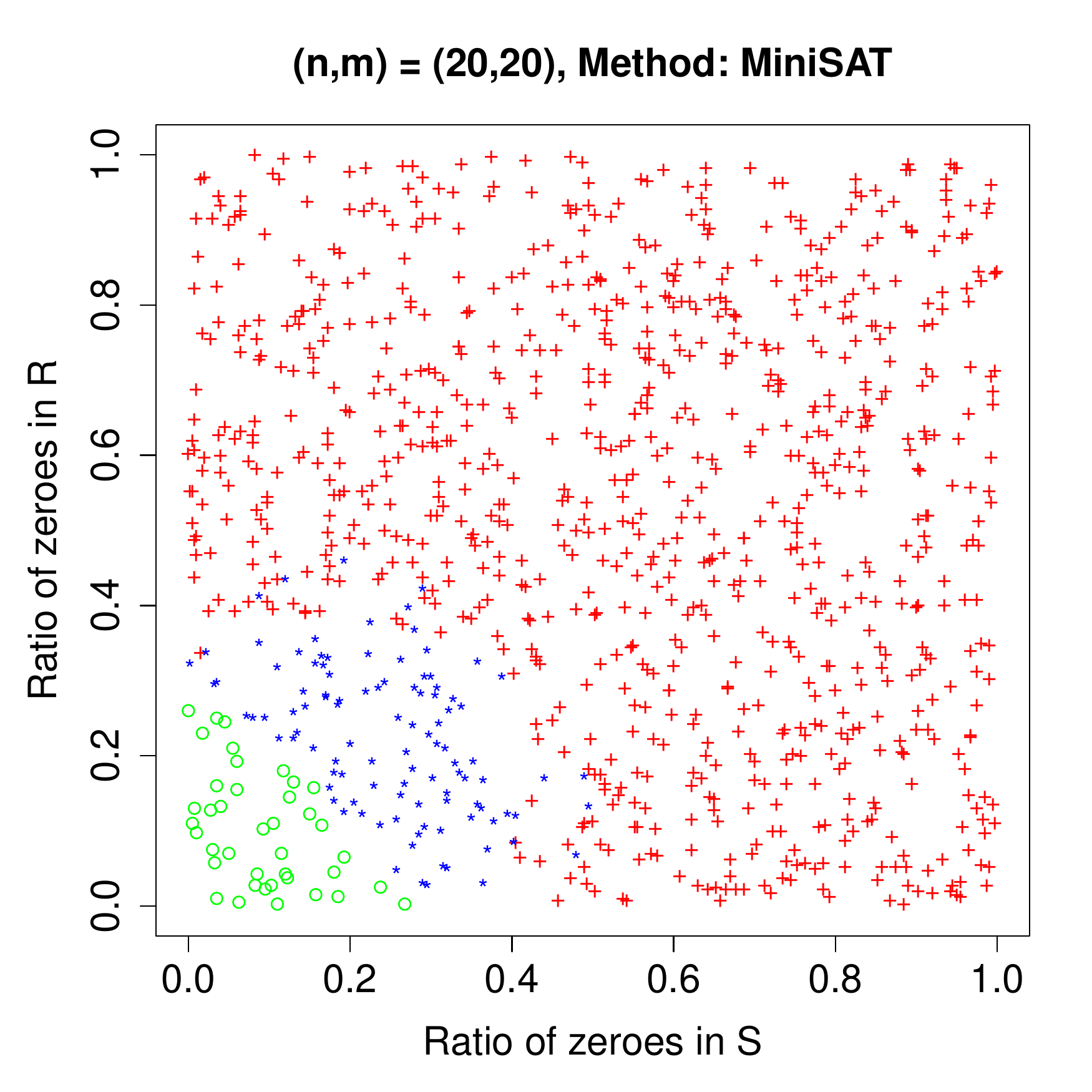}}
  \subfloat[]{\label{fig:d4020d}\includegraphics[width=0.45\textwidth]{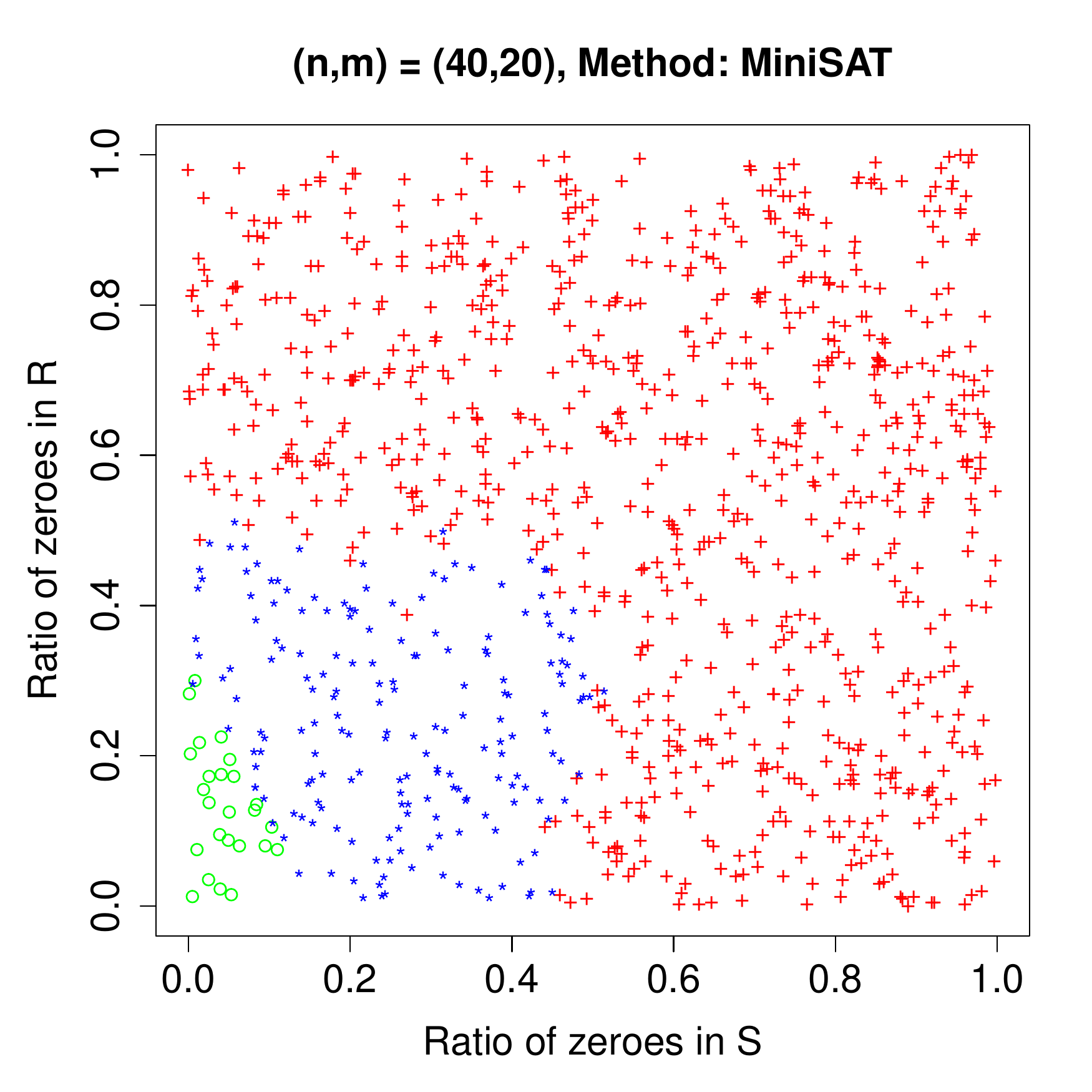}}\\
\vspace{-5mm}
  \subfloat[]{\label{fig:d4040d}\includegraphics[width=0.45\textwidth]{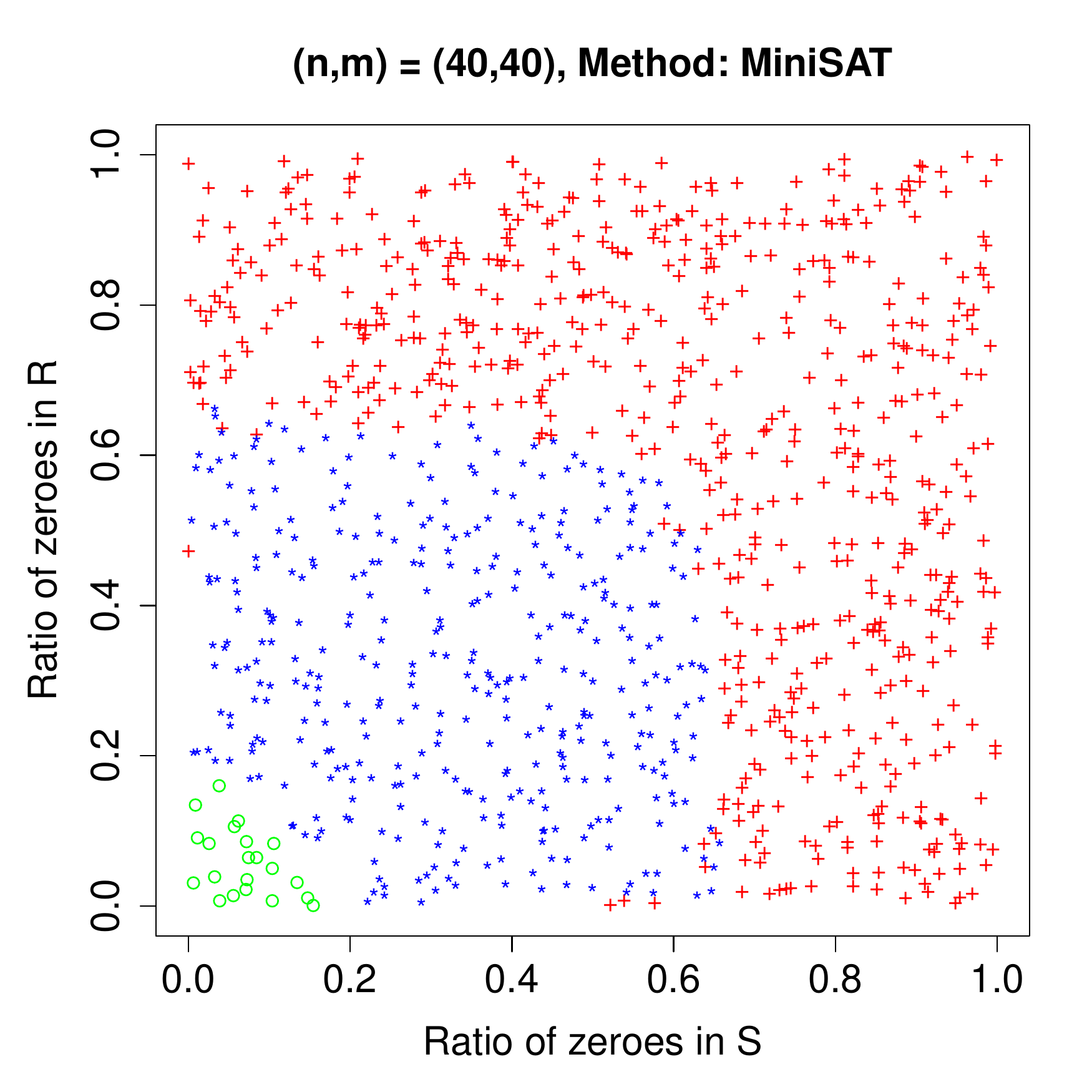}}\\
  \caption[Solvability status of test instances by $(p,q)$-value]{Each point marks the $(p,q)$-value of a random instance of a certain size. If the instance was satisfiable solvable, it is marked by a green circle, otherwise by a red cross. If the instance was not solvable in the given time of 3600 seconds, the instance is marked by a blue asterisk. A phase transition of easy and hard instance classes seems apparent.}
  \label{dots}
\end{figure}

\begin{figure}[t]
  \centering
\vspace{-8mm}
    \subfloat[]{\label{fig:d4040u}\includegraphics[width=0.45\textwidth]{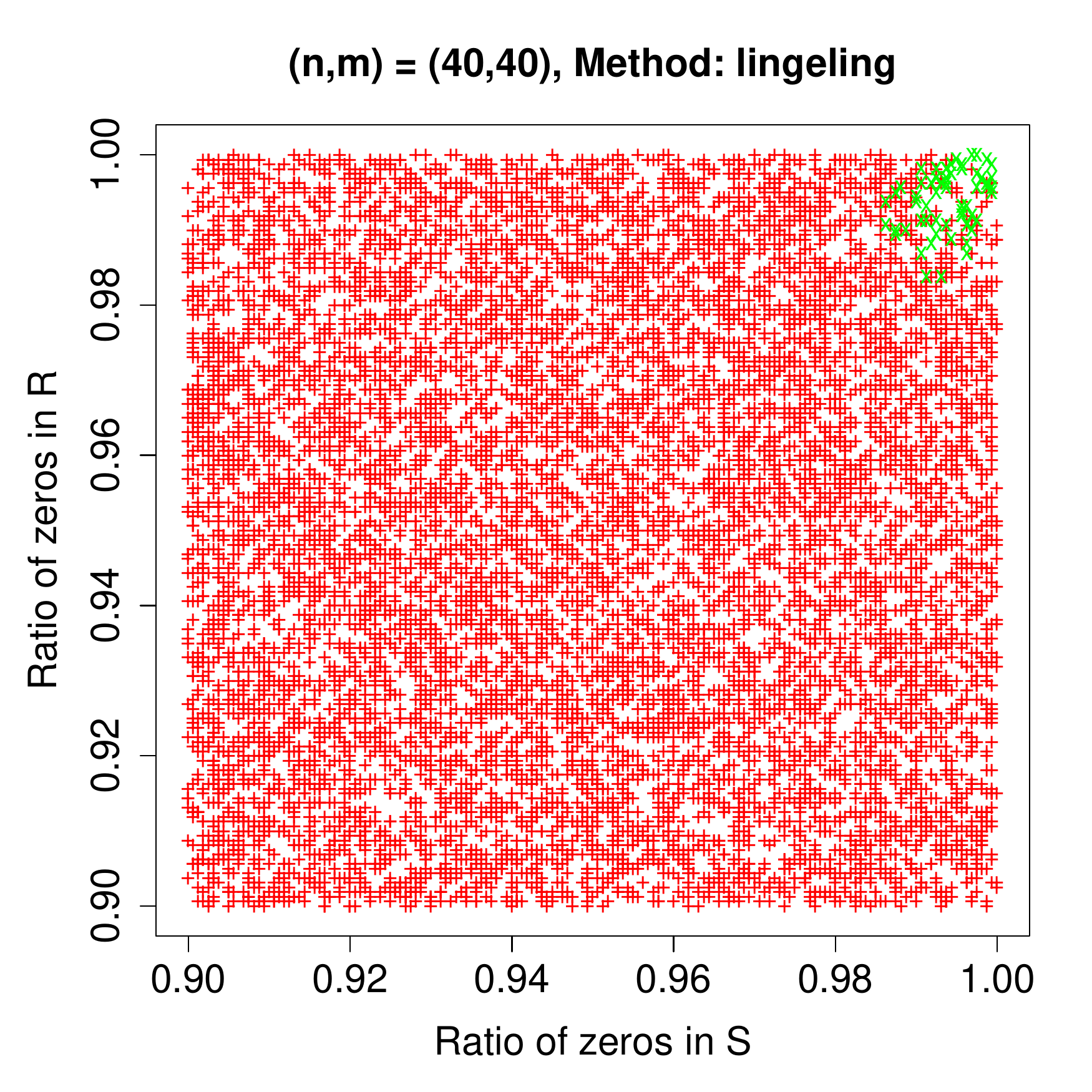}}\\
\vspace{-5mm}
  \subfloat[]{\label{fig:d100100u}\includegraphics[width=0.45\textwidth]{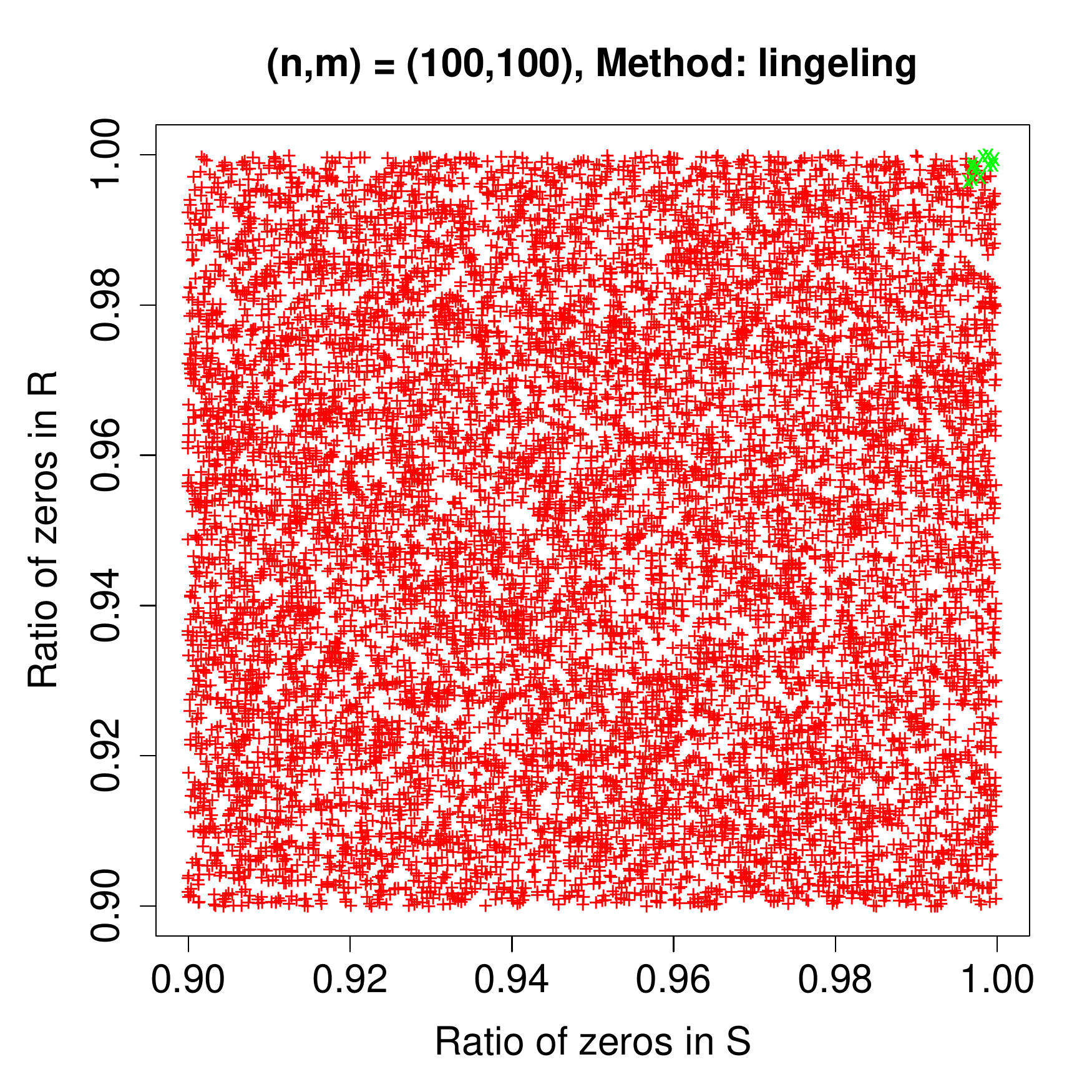}}\\
\vspace{-5mm}
    \subfloat[]{\label{fig:d120120u}\includegraphics[width=0.45\textwidth]{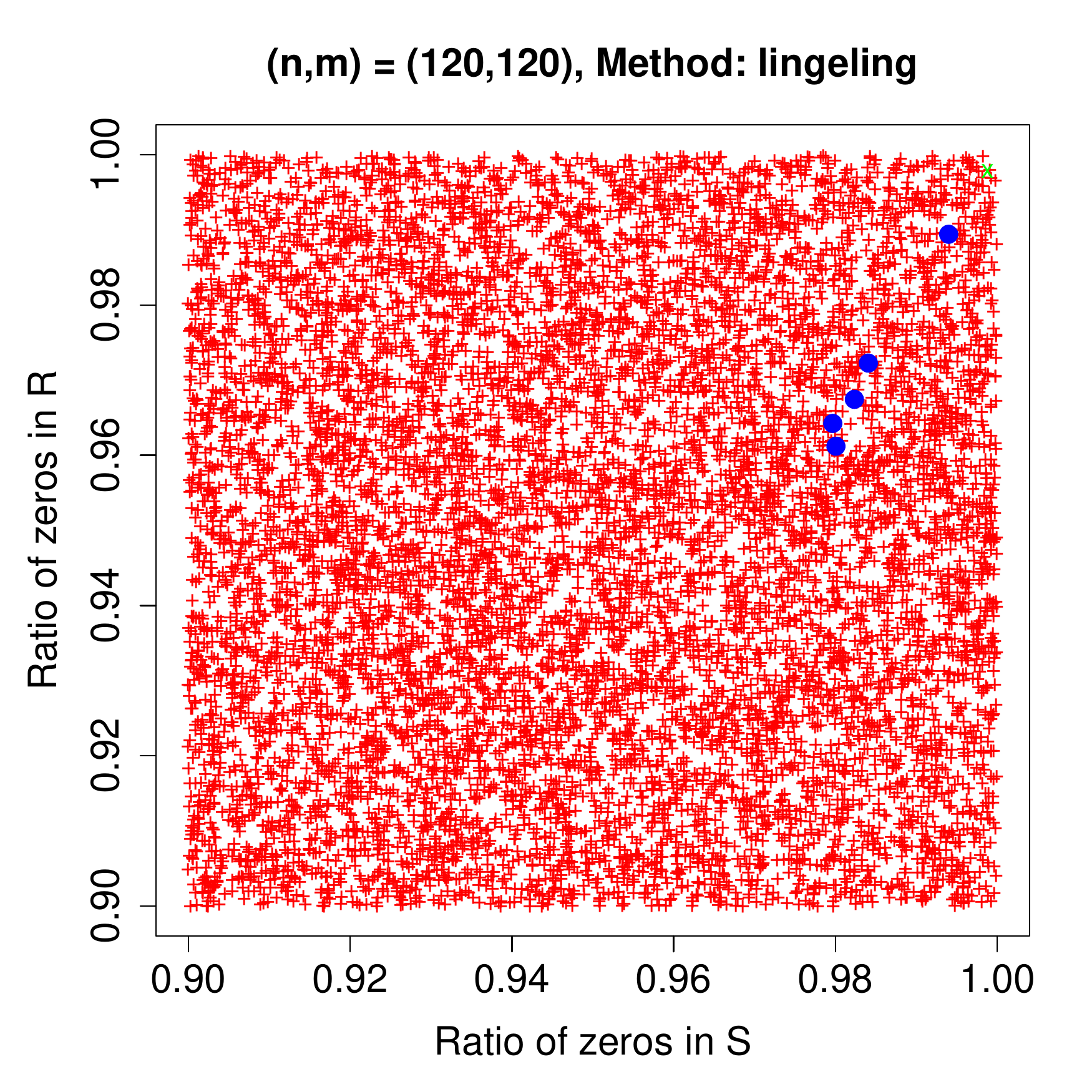}}
  \caption[Solvability status of sparse test instances by $(p,q)$-value]{The upper right corner of Fig.~\ref{dots} for certain sizes. The red crosses mark unsatisfiable instances, the green crosses satisfiable instances. Note that there only appear solvable random instances for these $(p,q)$-values greater than $(0.9,0.9)$, the solver lingeling did not time out on any instance. The big blue dots mark the $(p,q)$-values of the real world instances from Table~\ref{varReal}. The grid-like structure of Fig.~\ref{upperCorner}A follows from the instance's size and not from a lack of randomness.}
  \label{upperCorner}
\end{figure}

%##############################################################

\section{Conclusion}
\label{secConc}

In this paper we addressed the problem of reconstructing a hypergraph from two given simple graphs. The problem, called the Compound-Reaction-Reconstruction (CRR(\matrixS, \matrixR)) Problem, is motivated by methods in chemistry, and models the reconstruction of a reaction network from the chemical \mbox{\matrixS-graph} and \mbox{\matrixR-graph}. Since these simple graphs are objects of analysis in chemistry, but contain only partial information about the underlying chemical network, this problem is of significant interest. As our first contribution, we proved this problem to be NP-complete.

As our second contribution, we empirically investigated the solvabilitiy
of the problem using standard declarative approaches. The results in
Sec.~\ref{subsecRand} show that for random instances we quickly get to
the computational limits with exact solving methods. It must be noted
that the size of the largest random instances can still be regarded as
quite small, since naturally occurring reaction networks normally
contain a larger number of species and reactions
(cmp. Table~\ref{varReal}). However, the computational limits proved to
be quite different for real world instances than for random
instances. This should not be very surprising, since e.g.\ the degree
distribution of nodes in real networks does not follow a uniform random
distribution, but tend to have certain structural properties. There is
an ongoing debate on suitable measures for similarity between random
graph models and natural reaction networks. There is agreement on the
modularity of the networks \cite{han2004}, but besides this, the
different modeling approaches focus on different measures. Popular
approaches to simulate natural reaction networks are Erd\H{o}s-R\'enyi
networks \cite{erdHos1960}, small-world graphs \cite{Wagner:01}, and
scale-free structures \cite{jeong2000} (which has also been subject of
criticism \cite{tanaka2005}). We refer to \cite{lacroix2008} for an
overview.

By comparing the solvability of random and real world instances, our
results indicate that sparsity seems to be a key property of natural
networks which allows them to be solved for larger instances, although
our data also indicate that this property alone is not sufficient for an
adequate characterization of natural networks.

A natural object odif further research is the characterization of network properties, in order to be able to sample real-world-like networks. Properties to investigate include the above mentioned scale-freeness of the directed hypergraph and whether it is transferable to the graphs of \matrixS\ and \matrixR, as well as parameters like clustering coefficient, depth, diameter, and connectedness.

Additionally, future work could include validation of how robust our statements in \ref{subSparse} are, regarding other simulated classes of graphs and further real world instances of reaction networks.

%##############################################################

\bibliographystyle{abbrv}
\bibliography{DHGReconBin}

%##############################################################
\clearpage
\section*{Appendix}
%###########################

In this appendix, we provide an example of a pair of graphs \matrixS\
and \matrixR\ for which there exists no hypergraph having \matrixS\ and
\matrixR\ as its species and reaction graphs.

For the graphs \matrixS\ and \matrixR\ of Fig.~\ref{SRreconstruct}, it
is easily checked that the hypergraph in Fig.~\ref{fig:DHGsr} has these
as its species and reaction graphs. However, we now argue that if we change
the \matrixS\ and \matrixR\ pair by leaving out the red arc in
Fig.~\ref{SRreconstruct}, such a hypergraph no longer exists.

%###########################
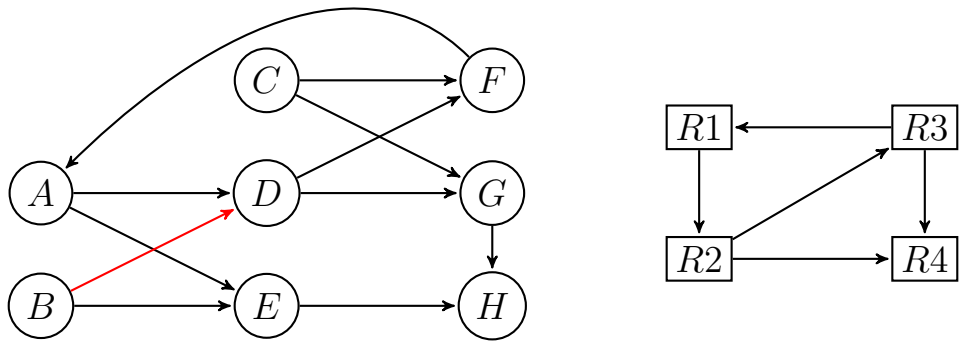
\begin{figure}[h!]
\centering
\begin{tikzpicture}[->,>=stealth',shorten >=1pt,auto,node distance=3cm,
  thick,species node/.style={draw,circle,font=\sffamily\Large\bfseries}]

\tikzstyle{reaction}=[draw,font=\sffamily\Large\bfseries]

\begin{scope}
\node[species node](A){$A$} ;
\node[species node](B) at ($(A)+(0,-1.5)$){$B$} ; 
\node[species node](C) at ($(A)+(3,1.5)$){$C$} ;   
\node[species node](D) at ($(C)+(0,-1.5)$){$D$} ; 
\node[species node](E) at ($(C)+(0,-3)$){$E$} ;   
\node[species node](F) at ($(C)+(3,0)$){$F$} ; 
\node[species node](G) at ($(F)+(0,-1.5)$){$G$} ;   
\node[species node](H) at ($(F)+(0,-3)$){$H$} ; 
\draw(A) to node{} (D);
\draw(A) to node{} (E);
\draw(B) [color=red] to node{} (D);
\draw(B) to node{} (E);
\draw(C) to node{} (F);
\draw(C) to node{} (G);
\draw(D) to node{} (F);
\draw(D) to node{} (G);
\draw(E) to node{} (H);
\draw(G) to node{} (H);
\draw(F) [out = 135, in = 45] to node{} (A);
\end{scope}

\begin{scope}[xshift=6.25cm,yshift=0.875cm]
\node[ reaction](R1)at ($(2.5,0)$){$R1$} ;
\node[ reaction](R2) at ($(R1)+(0,-1.75)$){$R2$} ; 
\node[ reaction](R3) at ($(R1)+(3,0)$){$R3$} ;   
\node[ reaction](R4) at ($(R3)+(0,-1.75)$){$R4$} ; 
\draw(R1) to node{} (R2);
\draw(R2) to node{} (R3);
\draw(R2) to node{} (R4);
\draw(R3) to node{} (R4);
\draw(R3) to node{} (R1);
\end{scope}
\end{tikzpicture}

\caption[\matrixS- and \matrixR-graph. One arc matters concerning the reconstructibility]{These graphs are the \matrixS-graph and \matrixR-graph of the
hypergraph of Fig.~\ref{fig:DHGsr}. However, removing the red arc~BD
from \matrixS\ gives a pair of graphs which is not the \matrixS-graph
and \matrixR-graph of any hypergraph.}
\label{SRreconstruct}
\end{figure}
%###########################

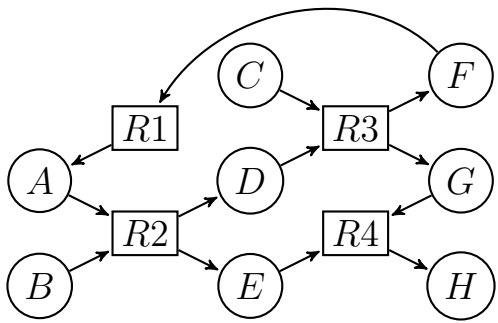
\begin{figure}[h!]
\centering
\begin{tikzpicture}[->,>=stealth',shorten >=1pt,auto,node distance=3cm,
  thick,species node/.style={draw,circle,font=\sffamily\Large\bfseries},scale=0.7]

\tikzstyle{reaction}=[draw,font=\sffamily\Large\bfseries]

\begin{scope}
\node[species node](A){$A$} ;
\node[species node](B) at ($(A)+(0,-2)$){$B$} ; 
\node[ reaction](R1)at ($(A)+(2,1)$){$R1$} ;
\node[ reaction](R2) at ($(R1)+(0,-2)$){$R2$} ; 
\node[species node](C) at ($(A)+(4,2)$){$C$} ;   
\node[species node](D) at ($(C)+(0,-2)$){$D$} ; 
\node[species node](E) at ($(C)+(0,-4)$){$E$} ;   
\node[ reaction](R3) at ($(C)+(2,-1)$){$R3$} ;   
\node[ reaction](R4) at ($(R3)+(0,-2)$){$R4$} ; 
\node[species node](F) at ($(C)+(4,0)$){$F$} ; 
\node[species node](G) at ($(F)+(0,-2)$){$G$} ;   
\node[species node](H) at ($(F)+(0,-4)$){$H$} ; 

\draw(A) to node{} (R2);
\draw(B) to node{} (R2);
\draw(R2) to node{} (D);
\draw(R2) to node{} (E);
\draw(C) to node{} (R3);
\draw(D) to node{} (R3);
\draw(R3) to node{} (G);
\draw(R3) to node{} (F);
\draw(E) to node{} (R4);
\draw(G) to node{} (R4);
\draw(R4) to node{} (H);
\draw(F) [out = 135, in = 60] to node{} (R1);
\draw(R1) to node{} (A);
\end{scope}

\end{tikzpicture}

\caption[Hypergraph, belonging to \matrixS\ and \matrixR\ in Fig.~\ref{SRreconstruct}]{A hypergraph having the graphs of Fig.~\ref{SRreconstruct} as
its \matrixS-graph and \matrixR-graph.}
\label{fig:DHGsr}
\end{figure}

So, assume a hypergraph~\matrixH\ exists having the graphs \matrixS\ and
\matrixR\ of Fig.~\ref{SRreconstruct} with the red arc~BD removed as its
species and reaction graphs.
Recall that in the species graph, two species $v_i$ and $v_j$ are adjacent
iff there is a reaction $a$ that has species $v_i$ as reactant and
species $v_j$ as product. Consequently, a hyperarc $a = (t(a),h(a))$ in
\matrixH\ induces in the species graph \matrixS\ of \matrixH\ all edges
of the complete directed bipartite graph between the vertex sets $t(a)$
and $h(a)$, and every edge of \matrixS\ is induced in this way by at
least one hyperarc.

Since no vertex in \matrixS\ has in- or out-degree greater than two, it
follows that for all hyperarcs~$a$ we have $|t(a)| \le 2$ and $|h(a)|
\le 2$. In particular, no hyperarc can induce more than four edges
of~\matrixS.
Also, since in \matrixS\ the out-degree of vertex~$F$ is one, it can
only be in $t(a)$ of hyperarcs~$a$ for which $|h(a)|$ is one. Similarly,
since the in-degree of vertex~$A$ is one, it can only be in $h(a)$ of
hyperarcs~$a$ for which $|t(a)|$ is one. It follows that the arc~$FA$
can only be induced by the hyperarc $(\{F\},\{A\})$, hence \matrixH\
must contain this.

Similarly, by the in- and out-degrees of vertices $E$, $G$, and $H$, the
arcs $EH$ and $GH$ must be induced by one or more of the hyperarcs
$(\{E\},\{H\})$, $(\{G\},\{H\})$ and $(\{E,G\},\{H\})$. Using two or
more hyperarcs for inducing $EH$ and $GH$ leaves at most one hyperarc to
induce the remaining seven arcs of \matrixS, as \matrixR\ shows that
there are exactly four hyperarcs in \matrixH. One hyperarc can induce at
most four arcs, so this is impossible. Hence, \matrixH\ must contain the
hyperarc $(\{E,G\},\{H\})$. Since seven arcs of \matrixS\ remains to be
induced, we must have $2 = |t(a)| = |h(a)| = |t(b)| = |h(b)|$ for the last two
hyperarcs $a$ and $b$. But this contradicts the outdegree of vertex~$B$
being only one.

\end{document}